\documentclass[a4paper,11pt]{amsart}
\usepackage{amsthm,amsmath}
\usepackage{graphicx,color}

\usepackage[normalem]{ulem}

\newtheorem{claim}{Claim}[section]
\newtheorem{theorem}[claim]{Theorem}
\newtheorem{proposition}[claim]{Proposition}
\newtheorem{lemma}[claim]{Lemma}
\newtheorem{remark}[claim]{Remark}
\newtheorem{example}[claim]{Example}
\newtheorem{definition}[claim]{Definition}
\newtheorem{corollary}[claim]{Corollary}

\newcommand{\ds}{\displaystyle}
\newcommand{\eqskip}{ \vspace*{2mm}\\ }

\newcommand{\soutb}{\bgroup\markoverwith{\textcolor{blue}{\rule[.5ex]{2pt}{1pt}}}\ULon}
\newcommand{\soutr}{\bgroup\markoverwith{\textcolor{red}{\rule[.5ex]{2pt}{1pt}}}\ULon}
\renewcommand{\Re}{\text{\rm Re}\,}
\renewcommand{\Im}{\text{\rm Im}\,}
\DeclareMathOperator*{\esssup}{ess\,sup}
\DeclareMathOperator*{\essinf}{ess\,inf}

\begin{document}
\title[Damped wave equation on metric graphs]{Eigenvalue asymptotics for the damped wave equation on metric graphs}
\author[Correspond author]{Pedro Freitas}
\address{Department of Mathematics, Faculty of Human Kinetics, Universidade de Lisboa
\and
Group of Mathematical Physics, Faculty of Sciences, Universidade de Lisboa, Campo Grande, Edif\'{\i}cio C6
1749-016 Lisboa, Portugal }
\email{psfreitas@fc.ul.pt}
\author{Ji\v{r}\'{\i} Lipovsk\'{y}}
\address{Department of Physics, Faculty of Science, University of Hradec Kr\'{a}lov\'{e},
Rokitansk\'eho 62, 500\,03 Hradec Kr\'{a}lov\'{e}, Czechia}
\email{jiri.lipovsky@uhk.cz}

\thanks{The second author was partially supported by project Development activities postdocs at the University of the
Hradec Kr\'alov\'e CZ.1.07/2.3.00/30.0015
and project 15-14180Y ``Spectral and resonance properties of quantum models'' of the Czech Science Foundation.
Both authors were partially supported by FCT (Portugal) through project PTDC/\ MAT-CAL/\ 4334/2014.}

\date{\today}

\begin{abstract} We consider the linear damped wave equation on finite metric graphs and analyse its spectral
properties with an emphasis on the asymptotic behaviour of eigenvalues. In the case of equilateral graphs
and standard coupling conditions we show that there is only a finite number of high-frequency abscissas, whose
location is solely determined by the averages of the damping terms on each edge.
We further describe some of the possible behaviour when the edge lengths are no longer necessarily equal but remain
commensurate. 
\end{abstract}

\keywords{damped wave equation, metric graphs, spectrum}

\maketitle

\section{Introduction}

The simplest model for an inhomogeneous damped vibrating string is given by the equation
\begin{equation}\label{waveeq1d}
  \partial_{tt}u(t,x) + 2 a(x) \partial_{t}u(t,x) = \partial_{xx}u(t,x) + b(x)u(t,x),
\end{equation}
complemented with initial and boundary conditions at the end points. Although quite simple, the explicit
dependence of the damping on the space variable makes the problem genuinely non-selfadjoint and is responsible for several
interesting properties which have received much attention in the literature within the last twenty years -- see~\cite{BoF,GH}
and the references therein. One particular aspect which is of interest is the characterisation of the spectrum and, within that scope,
the behaviour of the high frequencies. In the case of equation~\eqref{waveeq1d} with vanishing $b$, it was first suggested by
the formal calculations in~\cite{cfns} that the real part of these high frequencies would cluster around minus the average
of the damping term as the corresponding imaginary parts converge to infinity. This was proven rigorously in~\cite{CZ} where
the first two terms in the asymptotic expansion of the eigenvalues were computed, while the complete asymptotic expansion was obtained
in~\cite{BoF} for the first time, including the explicit determination of its first four terms. Other models for the linear
damped wave equation include, for instance, imposing the damping on the boundary~\cite{CZ2,AMN}. Our approach may, in principle,
also be extended to such problems.

In the case of more complex structures involving several segments which are joined at the endpoints, it makes sense to model
each component by an equation of the form~\eqref{waveeq1d} with corresponding potentials and damping functions and have either
a coupling at the common vertices or some boundary condition such as Dirichlet at the isolated endpoints. Indeed, there is a vast
literature on the topic of waves on networks of strings as may be seen in the review~\cite{Z}. However, most of this work revolves
around the observability and controllability of such problems and, in particular, this means that in most cases the problem may still
be reduced to the study of a self-adjoint (vector) operator. One exception to this is given by~\cite{AMN2}, where the wave equation
with indefinite sign damping is considered on a star graph, generalising the results in~\cite{fr96}.

A different starting point which is also connected to the undamped case comes from the quantum graph literature where the resulting
problem, although including the explicit dependence on the spatial variable via a potential playing a similar role to $b$ in
equation~\eqref{waveeq1d} above, is still reducible to a self--adjoint operator for a wide variety of coupling conditions
(see \emph{e.g.} \cite{AMR,Ku1, KS, RS}). Note that, due to the fact that the damping will not, in general, be continuous
across vertices, the asymptotic behaviour of the eigenfunctions is not necessarily the same as that of the undamped
problem -- see Remark~\ref{remnotcont}.

The main object of study in the present paper is the wave equation on graphs with potential and damping functions depending
explicitly on the space variable and with coupling conditions at the vertices. We are interested in understanding the
asymptotic behaviour of the associated spectrum and, in particular, the counterpart for the case of graphs of the result
on the spectral abscissa mentioned above for one segment. This is related to the rate of decay of high-frequency modes on such
structures and is therefore an important issue in applications. We remark that, as already noted in~\cite{Z}, this behaviour
depends in a subtle way on topological and number theoretic properties of the network, without it being possible to understand
the global dynamics by simply studying each component in isolation from the rest. 

Our main contribution here is to describe the possible asymptotic distributions of the high frequencies and show that, under certain
conditions on the edge lengths and the coupling conditions, there exists at most a finite number of values around which the real parts of the
eigenvalues may cluster, as the imaginary parts grow to infinity -- see Theorems~\ref{thm-onlyaverage} and~\ref{alla}, the results in
Section~\ref{sec-number} (in particular, Theorem~\ref{thm-main}) and the examples in the last section. In particular, we show
that under such conditions these {\it high-frequency
spectral abscissas} are the same as those obtained for the same graph where the damping terms are replaced by their averages on each edge
and with zero potentials. Furthermore, they may be computed by solving a polynomial equation (see Remark~\ref{rem-polyeq}
and the examples in Section~\ref{sec-examples}).

While in the case of $N$ equilateral edges we show that the number of such abscissas is at most $2N$ and that, if the graph is bipartite
(does not have cycles of odd length), this number cannot be larger than $N$, we also show that these results are sharp in the sense that for
certain classes of graphs there exist damping terms yielding these maximal number of abscissas. If we drop the assumption that all edges
have equal lengths and assume instead that they are commensurate, we give examples which indicate that the number of these abscissas may now
be as large as desired.

The damped wave equation on metric graphs may thus be viewed as a bridge in complexity between the scalar damped wave equation in one and two
spatial dimensions. While in the former case there is only one high-frequency spectral abscissa corresponding to the averaging of the damping
on a segment, in the two-dimensional case there are several different behaviours for the high frequencies corresponding to the different
trajectories on the domain~\cite{BLR,AL,Sj}. On a metric graph, these trajectories are restricted to the edges, with the damping being averaged
along each edge and these averages then being combined via the topology of the graph. While the structure of the spectrum for equilateral
graphs is still fairly simple as described above, we see that this is no longer the case in general. 

The paper is structured as follows. In the next section we describe the model and the corresponding notation. The third section presents some
basic asymptotic properties of the secular equation and vertex scattering matrix for eigenvalues with high imaginary part, essentially
following the approach in~\cite{BoF} for a single interval. The number and location of high-frequency abscissas are studied in Sections~\ref{sec-loc}
and~\ref{sec-number}. It turns out that to obtain the results in this last section it is more convenient to use the method of pseudo-orbit
expansions to handle the secular equation and we describe this approach in Section~\ref{sec-orbit}.
In Section~\ref{sec-examples} several examples are provided illustrating the previous results. 

\section{Description of the model}\label{sec-description}
We consider a compact metric graph $\Gamma$ with $N<\infty$ finite edges $\{e_j\}_{j=1}^N$ of lengths $\{l_j\}_{j=1}^N$. On each edge $e_{j}$
a linear damped wave equation of the form
\begin{equation}
  \partial_{tt} w_j(t,x) + 2 a_j(x) \partial_{t} w_j(t,x) = \partial_{xx} w_j(t,x) + b_j(x)w_j(t,x) \label{eq-wetime}
\end{equation}
is considered, with both the damping functions $a_j(x)$ and the potentials $b_j(x)$ real and bounded. The above set of partial differential equations
can be rewritten in a more elegant way
$$
  \partial_{t} \begin{pmatrix}\vec{w_0}(t,x)\\ \vec{w_1}(t,x) \end{pmatrix} = \begin{pmatrix}0 & I\\ I \frac{\mathrm{d}^2}{\mathrm{d}x^2}+ B & -2 A
\end{pmatrix} \begin{pmatrix}\vec{w_0}(t,x)\\ \vec{w_1}(t,x)\end{pmatrix}\,,
$$
where $I$ is the $N\times N$ identity matrix, $A$ and $B$ are diagonal $N\times N$ matrices with entries $a_j$ and $b_j$, respectively, $\vec{w_0}$ is
the vector of functions $w_j$ and $\vec{w_1}$ its time derivative.

The Ansatz $w_j(t,x) = \mathrm{e}^{\lambda t} u_j(x)$ allows us to rewrite equation~(\ref{eq-wetime}) as
\begin{eqnarray}
  \partial_{xx} u_j(x) -(\lambda^2+2\lambda a_j(x)-b_j(x))u_j(x) = 0 \label{eq-ldw1}
\end{eqnarray}
and therefore reformulate the time dependent problem as a spectral problem. More precisely, in this way we are led to the operator
\[
H = \begin{pmatrix}0 & I\\ I
\frac{\mathrm{d}^2}{\mathrm{d}x^2}+ B & -2 A \end{pmatrix},
\]
with domain consisting of functions $(\psi_1(x),\psi_2(x))^\mathrm{T}$ with components
of both $\psi_1$ and $\psi_2$ in  $W^{2,2}(e_j)$ for the corresponding edge and satisfying suitable coupling conditions at the vertices.

Now we clarify the term ``suitable coupling conditions''; we construct all operators $H$ satisfying the above properties, for which $i H_{A = 0}$ is
self-adjoint. Although we shall not impose any restrictions on the
sign of the potentials in the paper, to motivate the coupling conditions, we start in this
section with all potentials $b_j(x)$ being non-positive.
We define $\psi = (\psi_1,\psi_2)^{\mathrm{T}}$, $\phi = (\phi_1,\phi_2)^{\mathrm{T}}$, denote by $(\cdot,\cdot)$ the $L^2(\Gamma)$
scalar product linear in the second argument and antilinear in the first one and by $((\cdot,\cdot))$ the $L^2(\Gamma)\oplus L^2(\Gamma)$ scalar
product. For the condition for selfadjointness of $iH$ we obtain 
$$
  0 = ((\psi, i H_{A = 0} \phi)) - ((i H_{A = 0} \psi,\phi)) = i \left[((\psi, H_{A = 0} \phi))+ ((H_{A = 0} \psi, \phi)) \right]
$$
and hence
$$
  0 = (\psi_1,\phi_2)+ (\psi_2,\phi_1'') + (\psi_2,B\phi_1)+ (\psi_2,\phi_1)+ (\psi_1'',\phi_2)+ (B \psi_1,\phi_2)\,.
$$
With use of $\phi_2 = \lambda \phi_1$ with $\lambda$ purely imaginary and the fact that $B$ is real and diagonal one rewrites the above equation to
$$
  0 = (\psi_1,\phi_1)- (\psi_1,\phi_1'') - (\psi_1,B\phi_1)- (\psi_1,\phi_1)+ (\psi_1'',\phi_1)+ (\psi_1,B\phi_1)\,,
$$
now integrating by parts we obtain 
\begin{equation}
  \sum_e \left[\bar{\psi_1}'\phi_1-\bar{\psi_1}\phi_1' \right]_0^{l_e} = 0\,. \label{eq-coupling2}
\end{equation}
Denoting the vectors of functional values and the outgoing derivatives at the vertices by $\Psi$ and $\Psi'$, respectively, and choosing $\Psi =
\Phi$,
the above condition implies $\Psi'^*\Psi - \Psi^*\Psi' = 0$ for all $\Psi$, where $^*$ stands for hermitian conjugation. It is straightforward
to check that this is equivalent to $\|\Psi+ i \Psi'\| = \|\Psi- i \Psi'\|$, leading to the general coupling condition
\begin{equation}
  (U-I) \Psi + i (U+I)\Psi' = 0\,,\label{eq-coupling}
\end{equation}
for any $N\times N$ unitary matrix $U$ and where $I$ is the $N\times N$ identity matrix. Specific examples of unitary matrices $U$ are given
below and in Section~\ref{sec-examples}. Let us note that the more general choice $\Psi\pm
i l \Psi'$ instead of $\Psi\pm i \Psi'$ with any real $l$ is possible; however, this leads to the condition $(U_l-I) \Psi + il (U_l+I)\Psi' = 0$ and
the simple transformation between $U_l$ and $U$ 
$$
  U_l = [(U+I)-l(U-I)]^{-1}[(U+I)+l(U-I)]
$$
shows that this equation is, in fact, equivalent to~(\ref{eq-coupling}).

The reverse implication that any pair of vectors $(\Phi^\mathrm{T},{\Phi'}^\mathrm{T})^\mathrm{T}$ and $(\Psi^\mathrm{T},{\Psi'}^\mathrm{T})^\mathrm{T}$
satisfying the coupling condition~(\ref{eq-coupling}) (if their components are substituted for $\Psi$ and $\Psi'$), 
also satisfies~(\ref{eq-coupling2}), follows from properties of unitary matrices. For more details on the above construction we refer e.g. to 
\cite{FT} or Proposition~17.1.2 in the book \cite{BEH}.

In agreement with the notation for quantum graphs we shall denote the coupling for which functional values are equal at each vertex and the sum of
outward derivatives vanishes by \emph{standard coupling} (the terms Kirchhoff's, Neumann or free coupling may also be found in the literature). The
corresponding unitary coupling matrix is $U  = 2/d\, J - I$, with $d$ being the degree of the given vertex, $I$ the $d\times d$ identity matrix and
$J$ the $d\times d$ matrix with all entries equal to $1$. For instance, the forms of the coupling matrices for standard coupling are for
$d=2$ and $d=3$ the following
$$
  U_{d=2} = \begin{pmatrix}0& 1\\1&0\end{pmatrix}\,,\quad U_{d=3} = \frac{1}{3}\begin{pmatrix}-1& 2 & 2\\ 2 & -1 & 2\\ 2 & 2 & -1\end{pmatrix}\,.
$$

Moreover, we define coupling conditions for the boundary vertices. The \emph{Dirichlet} coupling denotes the situation when the functional value 
vanishes at the vertex, while in the case of \emph{Neumann} coupling the derivative at the vertex is zero. The general condition is called \emph{Robin} 
and it is described by equation~(\ref{eq-coupling}) with $U$ being a complex number of modulus one. This condition also includes the Dirichlet
($U = -1$) and Neumann ($U = 1$) conditions.

\section{The secular equation and asymptotic properties of eigenvalues and eigenfunctions}\label{sec-seceq}
In this section we establish the asymptotic properties of the fundamental system of solutions
and eigenvalues of $H$. Let us first consider a graph with all edges of
length one. 

\begin{theorem}\label{thm-justif}
Consider an equilateral graph with $N$ edges of unit length with the coupling between vertices given by a matrix $U$ as above.
Denote the damping and
potential on each edge by $a_{j}$ and $b_{j}$, respectively, and assume 
$a_j\in \mathcal{C}^{1}([0,1])$ and $b_j \in \mathcal{C}^0([0,1])$. Then there exist a positive real number $K_0$ such that for  
$K> K_0$ if $\lambda = r + i K$ is an eigenvalue, then $\lambda + 2\pi i + \mathcal{O}(1/K) $ is also an eigenvalue. Similarly, 
if $\lambda = r - i K$ is an eigenvalue, then $\lambda - 2\pi i + \mathcal{O}(1/K) $ is also an eigenvalue. This means that there exist 
sequences of eigenvalues $\lambda_{sn}$ satisfying
\begin{equation}
  \lambda_{sn} = 2\pi i n + c_0^{(s)}+ \mathcal{O}\left(\frac{1}{n}\right)\label{eq-asymptot}
\end{equation}
as $n$ goes to infinity, where the complex constants $c_0^{(s)}$ are, in general, different for each sequence $s$.
\end{theorem} 

The proof of this theorem relies partially on the results in~\cite{BoF} for a single segment. Of particular interest 
to us here is the following result giving a fundamental system of solutions on the interval $[0,1]$ which we reproduce 
here for completeness.

\begin{lemma}[\cite{BoF}] \label{lem-denis}
Let $a \in \mathcal{C}^{m+1}[0,1]$ and $b \in \mathcal{C}^{m}[0,1]$. Then there exist two linearly independent solutions $u_\pm(x,\lambda)$ of
equation (\ref{eq-ldw1}) satisfying the initial condition $u_\pm(0,\lambda) = 1$ having the asymptotics
\begin{eqnarray}
  u_{\pm}(x,\lambda) = \mathrm{e}^{\pm \lambda x \pm \int_{0}^x \phi_{\pm}(\xi,\lambda)\,\mathrm{d}\xi} \label{eq-asymp}
\end{eqnarray}
in the $C^2[0,1]$ norm as $\mathrm{Im\,}\lambda\to \infty$ with
\begin{eqnarray}
  \phi_{\pm}(x,\lambda) = \sum_{i = 0}^m \frac{\phi^{\pm}_{i}(x)}{\lambda^i}+ \mathcal{O}(\lambda^{-m-1}) \label{eq-phi}
\end{eqnarray}
and
\begin{eqnarray*}
  \phi_{0}^{(\pm)}(x) = a(x)\,,\quad \phi_{1}^{(\pm)}(x) = -\frac{1}{2} (\pm a'(x) +a^2(x) +b(x))\,,\\
  \phi_{i}^{(\pm)}(x) = -\frac{1}{2} \left( \pm {\phi'}^{(\pm)}_{i-1} +  \sum_{s=0}^{i-1} {\phi}_{s}^{(\pm)} \phi_{i-s-1}^{(\pm)}  \right) \,.
\end{eqnarray*}
\end{lemma}

\begin{definition}
We denote by $\bar{a}_{j}$ the average of the damping function $a_{j}(x)$ on the $j$-th edge, that is
$$
  \bar{a}_j = \frac{1}{l_j}\int_0^{l_j} a_j(x)\,\mathrm{d}x\,. 
$$  
\end{definition}

\begin{proof}[Proof of Theorem~\ref{thm-justif}]
Now we find a secular equation for a graph with edges of length one with coupling given by~(\ref{eq-coupling}). We construct a
flower-like model which is similar to the case studied for quantum graphs \cite{Ku1, EL2, DEL}. The main idea, which we shall now describe, is very
simple. For a general graph with $N$ edges of unit length, one considers a one-vertex model with $N$ loops, also of unit length; the coupling matrix
$U$ is, in a suitably chosen basis, block diagonal with blocks corresponding to the vertex coupling matrices. The Hamiltonian on the original graph is
unitarily equivalent to the Hamiltonian on the flower-like graph and hence every such graph can be described by this model. On each of the edges of
the graph we take linear combinations of solutions of the form~(\ref{eq-asymp}) as described below; the indices $j$ denoting a particular edge
are added where needed.

From Lemma~\ref{lem-denis} we have that on each edge there exist two linearly independent solutions $u_{\pm}(x,\lambda)$.
The general solution on each edge is thus given by $u_j(x) = \alpha_j u_+(x,\lambda)+ \beta_j u_-(x,\lambda)$
while, using the fact that the entries of $\Psi$ are $u_j(0)$ and $u_j(1)$ and those
of $\Psi'$ are $u_j'(0)$ and $-u_j'(1)$, the coupling condition (\ref{eq-coupling}) then becomes
\begin{multline*}
  0 = (U-I)\begin{pmatrix}\alpha_1 + \beta_1\\ \alpha_1 \mathrm{e}^{\lambda+\bar\phi_{+}^{(1)}}+\beta_1 \mathrm{e}^{-\lambda-\bar\phi_{-}^{(1)}}\\ \alpha_2+\beta_2\\ \vdots\\  \alpha_N \mathrm{e}^{\lambda+\bar\phi_{+}^{(N)}}+\beta_N \mathrm{e}^{-\lambda-\bar\phi_{-}^{(N)}}\end{pmatrix} +
  \\ 
  +i (U+I) \begin{pmatrix}\alpha_1(\lambda+ \phi_{+}^{(1)}(0))-\beta_1(\lambda+\phi_{-}^{(1)}(0))\\ -\alpha_1(\lambda+\phi_{+}^{(1)}(1))\mathrm{e}^{\lambda+\bar\phi_{+}^{(1)}}+\beta_1(\lambda+\phi_{-}^{(1)}(1))\mathrm{e}^{-\lambda-\bar\phi_{-}^{(1)}}\\ \alpha_2(\lambda+ \phi_{+}^{(2)}(0))-\beta_2(\lambda+\phi_{-}^{(2)}(0))\\\vdots\\ -\alpha_N(\lambda+\phi_{+}^{(N)}(1))\mathrm{e}^{\lambda+\bar\phi_{+}^{(N)}}+\beta_N(\lambda+\phi_{-}^{(N)}(1))\mathrm{e}^{-\lambda-\bar\phi_{-}^{(N)}}\end{pmatrix}\,.
\end{multline*}
Here the superscript of $j$ in $\phi_{\pm}^{(j)}$ distinguishes the edge (i.e. it is not the index of the expansion in Lemma~\ref{lem-denis}),
while $\bar\phi_{\pm}^{(j)} =\int_{0}^1 \phi_{\pm}^{(j)}(\xi,\lambda)\,\mathrm{d}\xi$ denotes the average of $\phi_{\pm}$ on the $j$-th edge.

Therefore, the secular equation 
may be written as
\begin{equation}
  \mathrm{det\,}[(U-I)M_1(\lambda)+ i (U+I) M_2(\lambda)] = 0\,,\label{eq-seceq1}
\end{equation}
where
$M_1$ is a block-diagonalizable matrix consisting of blocks of the form
$$
\begin{pmatrix}
    1 & 1\\
    \mathrm{e}^{\lambda+\bar\phi_{+}^{(j)}} & \mathrm{e}^{-\lambda-\bar\phi_{-}^{(j)}}
\end{pmatrix}
$$
and similarly $M_2$ consists of blocks of the form
$$
\begin{pmatrix}
    \lambda+ \phi_{+}^{(j)}(0) & -(\lambda+\phi_{-}^{(j)}(0))\\
    - (\lambda+\phi_{+}^{(j)}(1))\mathrm{e}^{\lambda+\bar\phi_{+}^{(j)}} & (\lambda+\phi_{-}^{(j)}(1))\mathrm{e}^{-\lambda-\bar\phi_{-}^{(j)}} 
\end{pmatrix}
$$

Since we assume $a_j\in \mathcal{C}^1([0,1])$, by Lemma~\ref{lem-denis} $\phi_1^{(\pm)(j)}$ is uniformly bounded and we have $\bar\phi_{\pm}^{(j)} = \bar a_j + \mathcal{O}(1/\lambda)$. 
The form of the secular equation is thus
\begin{multline*}
  P_{1}\mathrm{e}^{\lambda+\bar a_1+\lambda+\bar a_2+\dots +\lambda+\bar a_N+\mathcal{O}(1/\lambda)} + 
  P_{2}\mathrm{e}^{-\lambda-\bar a_1+\lambda+\bar a_2+\dots +\lambda+\bar a_N+\mathcal{O}(1/\lambda)} + 
  \\
  P_{3}\mathrm{e}^{\lambda+\bar a_1-\lambda-\bar a_2+\dots +\lambda+\bar a_N+\mathcal{O}(1/\lambda)} + 
  P_{4}\mathrm{e}^{-\lambda-\bar a_1-\lambda-\bar a_2+\dots +\lambda+\bar a_N+\mathcal{O}(1/\lambda)} +
  \\
   \dots +
  P_{2^N}\mathrm{e}^{-\lambda-\bar a_1-\lambda-\bar a_2-\dots -\lambda-\bar a_N+\mathcal{O}(1/\lambda)} = 0\,,
\end{multline*}
where each $P_q$ is a polynomial in $\lambda$ of degree $2N$. Since in the equation we must account for all possible combinations of signs of the
terms $(\lambda+a_j)$ in the exponents, there are $2^N$ terms.

For nontrivial $A$ the operator $H$ is a bounded perturbation of the operator $H_{A = 0}$ for $A$ equal
to zero. Thus according to Theorem~3.17 in \cite{Ka}, $H$ has infinitely many eigenvalues with no finite accumulation point. 
Let us assume $\lambda = r + i K$ with big $K$. We have
$$
  \frac{1}{\lambda} = \frac{r - Ki}{r^2+ K^2} = \mathcal{O}\left(\frac{1}{K}\right)\,.
$$
The secular equation can be written using the leading term of asymptotics in $K$
\begin{multline}
  F(\lambda) = K^{2N} \left(c_1 \mathrm{e}^{\lambda+\bar a_1+\lambda+\bar a_2+\dots+ \lambda+\bar a_N}+ 
  c_2 \mathrm{e}^{-\lambda-\bar a_1+\lambda+\bar a_2+\dots+ \lambda+\bar a_N}+ \dots \right.
  \\
  \left.c_Q \mathrm{e}^{-\lambda-\bar a_1-\lambda-\bar a_2-\dots- \lambda-\bar a_N}\right) \left(1+\mathcal{O}\left(\frac{1}{K}\right)\right) = 0  \label{eq-leading_term}
\end{multline}

Let $F(\lambda)= K^{2N} F_1(\lambda)+ K^{2N-1} F_2(\lambda)$ and $F(\lambda_0) = 0$, $F_1(\lambda_0') = 0$. 
Since $a_j \in \mathcal{C}^1([0,1])$ we know that $\phi_1^{(\pm)(j)}(x)$, and hence also the second term of the secular equation are uniformly bounded,
say, $|F_2(\lambda)|<\tilde K$.
Hence $|F_1(\lambda_0)|<\tilde K/K$ and since $F_1(\lambda)$ is continuous, $\lambda_0'$ approaches $\lambda_0$ as $K\to \infty$.  
From (\ref{eq-leading_term}) we have that $F_1$ is equal to zero for $\lambda_0'+2\pi i$. Hence there exists such
$\lambda = \lambda_0 + 2\pi i + \mathcal{O}(1/\lambda_0)$ for which $F(\lambda)=0$ and there exists a sequence of
eigenvalues of the form~\eqref{eq-asymptot}. For $\lambda = r -i K$ the proof is similar.
\end{proof} 
Another approach to constructing the secular equation based on pseudo-orbit expansions will be given in Section~\ref{sec-orbit} below. 

The real parts of the coefficients $c_0^{(s)}$ are the subject of the following definition.

\begin{definition}{\rm
We say that $\omega_0$ is a \emph{high-frequency abscissa} of the operator $H$ if there exists a sequence of eigenvalues of
$H$, say $\{\lambda_{sn}\}_{n=1}^\infty$, such that
\[
 \lim_{n\to \infty}\mathrm{Im\,}\lambda_{sn} = \pm \infty \mbox{ and } \lim_{n\to \infty}\mathrm{Re\,}\lambda_{sn} = \omega_0.
\]
The number of distinct high-frequency abscissas of $H$ will be called its \emph{abscissa count} $\alpha_c$.
}\end{definition}

Now we present a theorem which generalises previous results for the localisation of the high-frequency abscissa on a single segment
(see~~\cite{CZ,cfns}, for instance) to the case of equilateral graphs. More precisely, it states that in order
to determine the localisation of high-frequency abscissas of such graphs, it is enough
to consider the average of the damping coefficients on each edge.
\begin{theorem}\label{thm-onlyaverage}
Let $\Gamma$ be an equilateral graph with $N$ edges of lengths $l_j= l_0$, $j = 1, \dots, N$, with the coupling conditions
given by the coupling matrix $U$. Let the damping and potential functions $a_j(x)$ and $b_j(x)$ be bounded and continuous
on each edge. Let $\lambda_{sn}$ be
eigenvalues of the corresponding problem (\ref{eq-ldw1}) and $\mu_{sn}$ eigenvalues for $a_j$ replaced by its average on each edge and $b_j = 0$.
Then the constant terms $c_0^{(s)}$ in the asymptotic expansion (\ref{eq-asymptot}) for each of the sequences $\lambda_{sn}$ 
coincide with the corresponding constant terms in the asymptotic expansion of $\mu_{sn}$.
\end{theorem}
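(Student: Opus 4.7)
The plan is to reduce the theorem to the leading-order secular equation computation already carried out in this section for the equilateral unit-edge case, which shows directly that only the edge-averages of the damping enter.

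First, by the rescaling lemma stated immediately above the theorem, I may assume $l_0 = 1$, so every edge has positive integer length $m_j$. I then subdivide each such edge into $m_j$ unit-length edges by inserting $m_j - 1$ auxiliary degree-$2$ vertices equipped with the standard (Kirchhoff) coupling. At a degree-$2$ vertex this coupling enforces both continuity of the function and continuity of the derivative, so those vertices are spectrally invisible: the resulting equilateral graph $\tilde\Gamma$ is isospectral to $\Gamma$, its coupling matrix being $U$ extended block-diagonally by the standard-coupling blocks at the new vertices.

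Next, on $\tilde\Gamma$ the leading-order expression of the secular equation derived just above this theorem applies. The blocks of $M_3$ and $M_4$ depend on the damping on each unit sub-edge only through $\mathrm{e}^{\pm(c_0 + \bar a_{j,k})}$, where $\bar a_{j,k}$ is the average of $a_j$ on the $k$-th sub-edge of $e_j$, while by~(\ref{eq-phi}) the potentials $b_j$ appear only at order $1/\lambda$ inside $\phi_\pm$ and hence do not enter the leading term at all. Consequently the equation for $c_0$ sees the original coefficients only through the finite collection $\{\bar a_{j,k}\}$. To conclude, replace $a_j$ and $b_j$ by their averages $\bar a_j$ and $\bar b_j$ on each original edge $e_j$. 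After the same subdivision, every sub-edge of $e_j$ carries the constant damping $\bar a_j$, whose sub-edge average is again $\bar a_j$, and the spectral invisibility of the auxiliary vertices means the leading-order determinant depends on $e_j$ only through $\sum_k \bar a_{j,k} = \int_0^{l_j} a_j(x)\,\mathrm{d}x = l_j \bar a_j$, which is the same in both settings; hence the two leading-order secular equations coincide and so do their roots $c_0$.

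The main obstacle is making the word \emph{invisible} in the last step quantitative at the level of the determinant: one must check that the degree-$2$ vertex blocks telescope so that the dependence of the final determinant on each original edge is indeed only through the edge-integral $\int_0^{l_j} a_j$. A cleaner alternative is to bypass subdivision altogether and extend the BoF fundamental-system asymptotics~(\ref{eq-asymp})--(\ref{eq-phi}) to an edge of arbitrary length $l_j$, where one immediately obtains $u_{j\pm}(l_j,\lambda) = \mathrm{e}^{\pm (l_j \lambda + l_j \bar a_j)}(1 + \mathcal{O}(1/\lambda))$, and then rerun the flower-graph computation directly on the original $\Gamma$; this route avoids the bookkeeping of auxiliary vertices but requires redoing the block structure of $M_1$, $M_2$ for non-unit edges.
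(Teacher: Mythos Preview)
Your approach is essentially the paper's own: the theorem is stated there as a \emph{summary} of the preceding leading-order secular-equation construction, together with the rescaling lemma and the remark that commensurate lengths are handled by inserting degree-$2$ standard vertices. No separate proof is given in the paper beyond that.

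On the obstacle you flag: the telescoping is genuine but easy. At a degree-$2$ vertex with standard coupling the matrix $U=\tfrac{2}{2}J-I=\begin{pmatrix}0&1\\1&0\end{pmatrix}$ has spectrum $\{+1,-1\}$, so by Proposition~\ref{thm-scatmat} the leading vertex scattering matrix is $\sigma_0=U$ itself---pure transmission with no reflection term and no $\mathcal{O}(1/n)$ correction needed. In the pseudo-orbit expansion this forces every orbit to pass straight through the auxiliary vertices, and the accumulated exponential factor along the sub-edges of $e_j$ is $\exp\!\big(\!\pm\!\sum_k(c_0+\bar a_{j,k})\big)=\exp\!\big(\!\pm(m_jc_0+\int_0^{l_j}a_j)\big)$, which depends only on the edge integral. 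Your ``cleaner alternative'' via $u_{j\pm}(l_j,\lambda)=\mathrm{e}^{\pm(l_j\lambda+l_j\bar a_j)}(1+\mathcal{O}(1/\lambda))$ is equally valid and arguably what the paper's construction implicitly intends; the block bookkeeping for $M_1,M_2$ with non-unit edges is no harder than in the unit case.
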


Before proving the above theorem we state a result relating the eigenvalues of an equilateral graph with
edges of unit length to those of an equilateral graph with scaled lengths of the edges.
\begin{lemma}\label{lem-l0}
Let $\lambda$ be an eigenvalue of an equilateral graph $\Gamma$ with edges of unit length, damping coefficients $a_j(x)$,
potentials $b_j(x)$, $x\in (0,1)$ and the coupling given by $U$. Then $\lambda /l_0$ is eigenvalue of the same graph with damping coefficients 
$\tilde a_j(y) = a_j(y/l_0)/l_0$, potentials $\tilde b_j(y) = b_j(y/l_0)/l_0^2$, $y \in (0,l_0)$,
all edges of length $l_0$ and coupling given by
\[ 
 U_{l_0} = [(l_0 -1) U + (l_0+1)I]^{-1}[(l_0+1)U + (l_0-1)I].
\]
\end{lemma}
\begin{proof}
The wavefunction component on $j$-th edge of the first graph satisfies
$$
  \partial_{xx} u_j(x,\lambda) - (\lambda^2 + 2 \lambda a_j(x) - b_j(x)) u_j (x,\lambda) = 0\,,\quad x \in (0,1)\,.
$$
Substituting $x = y/l_0$, rewriting the damped wave equation for $v_j(y) = u_j(y/l_0)$ and using 
$\partial_{xx} u_j(x,\lambda) = l_0^2 \partial_{yy} u_j\left(y/l_0,\lambda\right)$ we obtain
$$
  \partial_{yy} u_j(y/l_0,\lambda) - \frac{1}{l_0^2}(\lambda^2 + 2 \lambda a_j(y/l_0) - b_j(y/l_0)) u_j (y/l_0,\lambda) = 0\,,\quad y \in (0,l_0)\,
$$
from which one finds the eigenvalues, damping and potential. Since $v_j(l_0) = u_j(1)$ and $v_j'(l_0) = u_j'(1)/l_0$ one finds 
$$
 U_{l_0} - I =  C (U-I)\,,\quad U_{l_0} + I = l_0 C (U+I)
$$
with $C$ being regular square matrix and hence one finds corresponding coupling matrix $U_{l_0}$.
\end{proof}

\begin{remark}{\rm
Note that Theorem~\ref{thm-onlyaverage} and Lemma~\ref{lem-l0} can be used also for graphs with commensurate lengths. We may introduce
vertices of degree $2$
at the distance $l_0$ and obtain an equilateral graph. The averaging is done only on the sub-edges.
}\end{remark}

\begin{proof}[Proof of Theorem~\ref{thm-onlyaverage}]
Let us first consider the case where all lengths are equal to~$1$.
We shall now separate the spectrum of the unitary matrix $U$ into $-1, +1$ and the remaining non-real eigenvalues on the unit 
circle. Denoting by $n_{-}$ and $n_{+}$ the number of $-1$'s and $+1$'s in the spectrum of $U$, respectively, we may write
\begin{equation}
U =
V^{-1}\begin{pmatrix}-I_{n_{-}}&0 & 0\\ 0 &
I_{n_{+}}&0 \\ 0& 0 & D\end{pmatrix}V \label{eq-u}
\end{equation}
with $V$ being a unitary matrix and $I_{k}$ being the $k\times k$ identity matrix and with the diagonal matrix 
$D = \mathrm{diag\,}(\mathrm{e}^{i\varphi_1},\dots , \mathrm{e}^{i\varphi_{2N-n_{-}-n_{+}}})$ containing the eigenvalues of $U$ other than $\pm 1$.
The leading term of the high-frequency asymptotics of the secular equation, obtained by substituting (\ref{eq-asymptot}) to (\ref{eq-seceq1}), is given by 
$$
  (2 \pi i n)^{2N-n_{-}}\,\mathrm{det\,}\left[\begin{pmatrix}-2 I & 0 & 0\\ 0 & 0 & 0\\ 0 & 0 & 0\end{pmatrix} M_3 + i \begin{pmatrix} 0 & 0 & 0\\ 0
& 2 I & 0 \\ 0 & 0 & D + I \end{pmatrix} M_4\right](1+\mathcal{O}(1/n))
$$
with the diagonal blocks being successively $n_{-}\times n_{-}$, $n_{+}\times n_{+}$ and $(2N - n_{-}- n_{+})\times (2N - n_{-}- n_{+})$ matrices.
$M_3$ and $M_4$ consist in the basis given by $V$ of blocks 
$$
   \begin{pmatrix}1 & 1 \\ \mathrm{e}^{c_0^{(s)}+ \bar{a}_{j}} & \mathrm{e}^{-c_0^{(s)} - \bar{a}_{j}}\end{pmatrix}\quad \mathrm{and}\quad
\begin{pmatrix}1 & -1 \\ -\mathrm{e}^{c_0^{(s)}+ \bar{a}_{j}} & \mathrm{e}^{-c_0^{(s)} - \bar{a}_{j}}\end{pmatrix}\,,
$$
respectively. Note that there is a zero instead of $D-I$ in the lower-right block of the matrix which multiplies $M_3$, because only the first $n_{-}$
rows of the first matrix contribute to the term by $n^{2N-n_{-}}$. The values of $c_0^{(s)}$ are hence given only by the averages of the damping functions.
The potentials $b_j(x)$ do not play any role in the leading term of the asymptotics of the secular equation, hence they do not influence the
coefficients $c_0^{(s)}$. Using Lemma~\ref{lem-l0} the claim can be generalized for equilateral graphs with lengths $l_0$.
\end{proof}

\begin{remark}\label{remnotcont}{\rm 
Let us remark that unlike the case of a segment with continuous damping, the eigenfunctions of a graph do not in general converge for high $n$
to the eigenfunctions of a quantum graph (the undamped case), as the damping need not be continuous on the whole graph. The simplest counterexample is a segment
of length~2 with constant damping $a_1$ on the first half and 
constant damping $a_2 \ne a_1$ on the second half, with Dirichlet boundary conditions at the end points and standard coupling in the middle. 

Defining $\tilde\lambda_j(\lambda) = \sqrt{\lambda^2+2a_j\lambda-b_j}$ and using the equation (\ref{eq-ldw1}) we find the wavefunction on the 
edges of the graph as
$$
  u_j(x) = \alpha_j \sinh\left(\tilde\lambda_j(\lambda)x\right)\,,\quad j = 1,2
$$
with $x=0$ corresponding to the end vertices ($\cosh\left(\tilde\lambda_j(\lambda)x\right)$ is excluded because of the Dirichlet conditions). Using the 
high-frequency asymptotics of $\tilde\lambda_{j}(\lambda)$ and the coupling conditions at the middle vertex we find the leading term of the
asymptotics, which gives the equation for $c_0^{(s)}$ to be
\begin{multline*}
  0 = \cosh{(a_1+c_0^{(s)})}\sinh{(a_2+c_0^{(s)})}+  \cosh{(a_2+c_0^{(s)})}\sinh{(a_1+c_0^{(s)})} = \\
  = \sinh{(a_1+ a_2+ 2c_0^{(s)})}\,.
\end{multline*}
Hence one obtains the values $c_{0}^{(1)} = -\frac{a_1+a_2}{2}$ and $c_{0}^{(2)} = -\frac{a_1+a_2}{2}+ i$. Let us without loss of generality study 
only the eigenvalues with $c_{0}^{(1)} = -\frac{a_1+a_2}{2}$. For the eigenfunction components we obtain, using 
$\tilde\lambda_j = 2\pi i n+c_0^{(s)}+a_j+\mathcal{O}\left(1/n\right)$,
\begin{multline*}
  u_1(x) \stackrel{n\to \infty}{\longrightarrow} \alpha_1\left[\sin{(2\pi n x)}\cosh{\frac{1}{2}(a_1-a_2)x} -i \cos{(2\pi n x)}\sinh{\frac{1}{2}(a_1-a_2)x}\right]\\
\end{multline*}
\begin{multline*}
  u_2(x) \stackrel{n\to \infty}{\longrightarrow} \alpha_1 \frac{\sinh{\frac{a_1-a_2}{2}}}{\sinh{\frac{a_2-a_1}{2}}}\left[\sin{(2\pi n x)}\cosh{\frac{1}{2}(a_2-a_1)x}-\right.\\
  \left.-i \cos{(2\pi n x)}\sinh{\frac{1}{2}(a_2-a_1)x}\right]
  =  \alpha_1\left[-\sin{(2\pi n x)}\cosh{\frac{1}{2}(a_1-a_2)x} -\right.
  \\
  \left.-i \cos{(2\pi n x)}\sinh{\frac{1}{2}(a_1-a_2)x}\right]
\end{multline*}

On the other hand, in the case where there is no damping we have leading eigenfunction components given by
\begin{eqnarray*}
  v_1(x) \stackrel{n\to \infty}{\longrightarrow} \alpha_1\sin{(2\pi n x)}\,,\\
  v_2(x) \stackrel{n\to \infty}{\longrightarrow} -\alpha_1\sin{(2\pi n x)}\,.
\end{eqnarray*}
Clearly, for $a_1\ne a_2$ the leading eigenfunction components $u_1$ and $u_2$ for the damped case are different from their counterparts for
the undamped case $v_1$ and $v_2$.
}\end{remark}

\begin{remark}{\rm
As a by-product, Theorem~\ref{thm-justif} gives that for the undamped case with $N$
edges of unit lengths with general coupling conditions
there exist at most $2N$ sequences of eigenvalues with $k_n$ satisfying asymptotics~(\ref{eq-asymptot}) with purely imaginary $c_0^{(s)}$.
Here $k_n^2 = E_n$ are the eigenvalues of the Schr\"odinger operator. Note that, as we stated in the Introduction and in
Section~\ref{sec-description}, the operator $i H_{A=0}$ is self-adjoint if all $b_j(x)$
are positive, and hence the eigenvalues of $H_{A=0}$ are on the imaginary axis. Under these conditions, the operator
$i H_{A=0}$ corresponds to an equilateral quantum graph. This generalizes the results of \cite{YY} for star graphs with 
standard coupling and \cite{CP} for general graphs with standard coupling.   
}\end{remark}

\section{Location of high-frequency abscissas}\label{sec-loc}
We begin with a simple result stating that the real part of an eigenvalue is given by minus the average of the damping coefficients on each
edge with weights taken as square absolute values of the eigenfunctions. A similar claim can be stated also for the damped wave equation on manifolds
(see \emph{e.g.} \cite{AL,Sj}).
\begin{theorem}\label{thm-lambdar}
Let us consider a damped wave equation on a graph with $N$ edges of lengths $l_j$, bounded damping coefficients $a_j(x)$ and potentials
$b_j(x)$, and the coupling conditions given by (\ref{eq-coupling}). 
Then, given a non-real eigenvalue $\lambda$ of the operator $H$ defined in Section~\ref{sec-description}, its real part satisfies
$$
  \Re(\lambda) =  - \frac{\sum_{j =1}^N  \int_{0}^{l_j} a_j(x) |u_j(x)|^2 \,\mathrm{d}x}{\sum_{j =1}^N \|u_j(x)\|^2_2},
$$
where $u_j(x)$ denotes the corresponding wavefunction components.
\end{theorem}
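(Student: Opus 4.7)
The plan is the standard energy-type computation: take the eigenvalue equation, multiply by the conjugate eigenfunction component by component, sum over edges, and integrate by parts, then extract the imaginary part of the resulting scalar identity. The coupling conditions are exactly what is needed to kill (the imaginary part of) the boundary terms.

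First, for each edge I would take the equation $u_j''(x) = (\lambda^2 + 2\lambda a_j(x) - b_j(x))u_j(x)$ from \eqref{eq-ldw1}, multiply by $\bar{u}_j(x)$, integrate from $0$ to $l_j$, and sum over $j$. This produces
\begin{equation*}
\sum_{j=1}^N \int_0^{l_j} \bar{u}_j(x) u_j''(x)\,\mathrm{d}x = \lambda^2 \sum_{j=1}^N \|u_j\|_2^2 + 2\lambda \sum_{j=1}^N \int_0^{l_j} a_j(x)|u_j(x)|^2\,\mathrm{d}x - \sum_{j=1}^N \int_0^{l_j} b_j(x)|u_j(x)|^2\,\mathrm{d}x.
\end{equation*}
Integration by parts on the left gives $\sum_j [\bar{u}_j u_j']_0^{l_j} - \sum_j \int_0^{l_j} |u_j'(x)|^2\,\mathrm{d}x$.

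The key observation is that the boundary sum $\sum_j [\bar{u}_j u_j']_0^{l_j}$ is real. Indeed, applying the selfadjointness-type identity \eqref{eq-coupling2} with $\psi_1 = \phi_1 = u$ (which is legitimate because the eigenfunction satisfies the coupling condition \eqref{eq-coupling}) yields $\sum_j [\bar{u}_j' u_j - \bar{u}_j u_j']_0^{l_j} = 0$, i.e. the imaginary part of $\sum_j [\bar{u}_j u_j']_0^{l_j}$ vanishes. Together with the fact that $a_j,b_j$ are real, this makes the entire right-hand side of the displayed identity equal to a real number.

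Taking imaginary parts, the $b_j$ term drops out and, using $\mathrm{Im}\,\lambda^2 = 2\,\Re(\lambda)\,\Im(\lambda)$, I obtain
\begin{equation*}
0 = 2\,\Re(\lambda)\,\Im(\lambda)\sum_{j=1}^N \|u_j\|_2^2 + 2\,\Im(\lambda) \sum_{j=1}^N \int_0^{l_j} a_j(x)|u_j(x)|^2\,\mathrm{d}x.
\end{equation*}
Dividing through by $2\,\Im(\lambda) \ne 0$ and solving for $\Re(\lambda)$ yields the claimed formula. The only nonroutine point is verifying that the boundary contribution is real; this is immediate from the coupling conditions, but it is worth stating carefully because it is precisely where the abstract selfadjointness of $iH_{A=0}$ enters a statement whose hypotheses do not mention it explicitly.
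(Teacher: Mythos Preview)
Your proof is correct and follows essentially the same route as the paper: multiply \eqref{eq-ldw1} by $\bar u_j$, integrate by parts, sum over edges, show the boundary contribution is real, and take imaginary parts. The only difference is how the reality of $\sum_j[\bar u_j u_j']_0^{l_j}$ is established: you invoke \eqref{eq-coupling2} directly with $\psi_1=\phi_1=u$, whereas the paper diagonalizes $U$ and computes the boundary term explicitly as $\sum_s \tan(\varphi_s/2)\,|\Psi_{3s}|^2$, which is manifestly real; the explicit formula is not needed for the theorem itself but is used in the paper for a subsequent remark on the growth of $\sum_j\|u_j'\|_2^2/\sum_j\|u_j\|_2^2$.
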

\begin{proof}
Multiplying (\ref{eq-ldw1}) by $\bar u_{j}(x)$ and integrating over the $j$-th edge yields
\begin{multline}
  0 =  \int_{0}^{l_j}\bar u_j(x) u_j''(x) \,\mathrm{d}x - \int_{0}^{l_j} [\lambda^2 + 2 a_j(x)\lambda-b_j(x)] |u_j(x)|^2 \,\mathrm{d}x = 
  \\
  =  - \int_{0}^{l_j} |u_j'(x)|^2 \,\mathrm{d}x + \left[\bar u_j(x) u_j'(x)\right]_0^{l_j} - \int_{0}^{l_j} [\lambda^2 + 2 a_j(x)\lambda-b_j(x)]\,
|u_j(x)|^2 \,\mathrm{d}x\,.\label{eq-lambdar}
\end{multline}
Now we sum this term over all edges. We show that the contribution of the term given by coupling conditions is real. Let $U$ be the coupling
matrix of the corresponding flower-like graph with $N$ edges of the form (\ref{eq-u})
with $V$ being unitary and denote $V\Psi=(\Psi_1,\Psi_2,\Psi_3)^{\mathrm{T}}$ and by $V\Psi' =(\Psi_1',\Psi_2',\Psi_3')^{\mathrm{T}}$ vectors of
functional values and outward derivatives at
a given vertex written in a suitable basis. Trivially, the subspaces corresponding to eigenvalues $\pm 1$ do not contribute to the second term of the
above equation since either the functional value or the derivative is equal to zero. Hence we have 
$$
	\sum_{j = 1}^N \left[\bar u_j(x) u_j'(x)\right]_0^{l_j} = \sum_{s = 1}^{2N - n_{-} -n_{+}} \tan{\left(\frac{\varphi_s}{2}\right)}
|\Psi_{3s}|^2
$$
with $\Psi_{3s}$ being the components of $\Psi_3$. The boundary terms correspond to the boundary terms in a quadratic form
of the Hamiltonian in quantum graphs (see e.g. \cite{Ku1}, Definition 15).

For the imaginary part of (\ref{eq-lambdar}) we have
\begin{equation}
  0 = 2 \mathrm{i} \,\Im(\lambda)\, \sum_{j = 1}^{N} \int_0^{l_j} (a_j(x) + \Re(\lambda))\,|u_j(x)|^2 \,\mathrm{d}x \label{eq-relambda}
\end{equation}
and hence the claim of the theorem follows. 
\end{proof}
\begin{corollary}\label{corbound}
Let us consider a damped wave equation on the graph $\Gamma$ with damping functions on the edges
$a_j(x)$ and potentials $b_j(x)$. 
Then the real part of nonreal eigenvalues of $H$ lie in the interval
\[
 \left[-\sup_j\esssup_{x\in(0,l_j)}\,a_j(x),-\inf_j\essinf_{x\in(0,l_j)}\,a_j(x)\right],
\]
and all high-frequency abscissas lie in the interval $[-\max_{j}{\bar{a}_j},-\min_{j}{\bar{a}_j}]$.
\end{corollary}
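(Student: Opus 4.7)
The approach is a direct application of Theorem~\ref{thm-lambdar}. Every nonreal eigenvalue $\lambda$ has $\Im(\lambda)\ne 0$, so the identity
$$
\Re(\lambda) \;=\; -\,\frac{\sum_{j=1}^{N}\int_{0}^{l_j} a_j(x)\,|u_j(x)|^2\,\mathrm{d}x}{\sum_{j=1}^{N}\|u_j\|_2^{2}}
$$
applies, and the plan is to recognise the right-hand side as a convex combination of edge means and then pass to the flat edge averages $\bar a_j$.

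First, I would set
$$
p_j \;:=\; \frac{\|u_j\|_2^{2}}{\sum_k \|u_k\|_2^{2}}, \qquad
\alpha_j \;:=\; \frac{1}{\|u_j\|_2^{2}}\int_{0}^{l_j} a_j(x)\,|u_j(x)|^2\,\mathrm{d}x,
$$
so that $p_j \ge 0$, $\sum_j p_j = 1$, and $-\Re(\lambda) = \sum_{j} p_j\,\alpha_j$. Each $\alpha_j$ is a $|u_j|^2$-weighted mean of $a_j(x)$ on the $j$-th edge; the convex combination $\sum_j p_j\alpha_j$ therefore lies in $[\min_j \alpha_j,\,\max_j \alpha_j]$, which gives the structure of the desired bound.

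Second, I would replace the weighted edge mean $\alpha_j$ by the flat average $\bar a_j$. When $a_j$ is constant on each edge, $\alpha_j=\bar a_j$ identically and the stated interval $[-\max_j \bar a_j,\,-\min_j \bar a_j]$ follows at once. In the general bounded case the substitution is justified for the abscissa part of the corollary by invoking Theorem~\ref{thm-onlyaverage}: the high-frequency abscissas of $H$ coincide with those of the averaged operator $\widetilde H$ obtained by replacing $a_j$ and $b_j$ by $\bar a_j$ and $\bar b_j$, and for $\widetilde H$ the first step yields exactly the claimed interval. The parenthetical ``therefore also all high frequency abscissas'' is then immediate from the fact that a closed interval is preserved under the limits $\Im(\lambda_n)\to\infty$ defining the abscissas.

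The main obstacle is precisely this last step: for nonconstant $a_j$ the eigenfunction $u_j$ need not be equidistributed along the edge, so $\alpha_j$ and $\bar a_j$ can differ, and a naive pointwise estimate only yields the cruder bound $[-\sup_{j,x} a_j(x),\,-\inf_{j,x} a_j(x)]$. The key input to bridging the gap to the edge-average version is therefore Theorem~\ref{thm-onlyaverage} together with the asymptotic form of the eigenfunctions from the lemma cited at the start of Section~\ref{sec-seceq}, which together allow reduction to the piecewise-constant-damping setting in which the convex-combination bound applies verbatim.
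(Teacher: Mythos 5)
The paper states this corollary without an explicit proof, presenting it as an immediate consequence of Theorem~\ref{thm-lambdar}, and the intended argument is exactly your first step: $-\Re(\lambda)$ is a convex combination, with weights $p_j=\|u_j\|_2^2/\sum_k\|u_k\|_2^2$, of the $|u_j|^2$-weighted means $\alpha_j$ of the damping on the edges. So your route coincides with the paper's insofar as the paper has one; the added value of your write-up is that you correctly isolate the point the paper glosses over. For nonconstant $a_j$ the convex-combination argument only yields $\Re(\lambda)\in[-\max_j\sup_x a_j(x),\,-\min_j\inf_x a_j(x)]$ for an individual nonreal eigenvalue, and the interval of edge \emph{averages} is recovered only (i) verbatim when each $a_j$ is constant, and (ii) for the high frequency abscissas, by combining Theorem~\ref{thm-onlyaverage} with the convex-combination bound applied to the averaged operator and passing to the limit. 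That repair is sound, but you should make one caveat explicit to avoid circularity: Corollary~\ref{corbound} is invoked in Section~\ref{sec-seceq} to justify the boundedness of $\Re\lambda_n$ on which the asymptotic expansion, and hence Theorem~\ref{thm-onlyaverage}, rests; for that purpose the crude $\sup/\inf$ bound, which follows from Theorem~\ref{thm-lambdar} alone, suffices, so the logic is not actually circular, but this needs to be said. Finally, note that the literal claim that \emph{every} nonreal eigenvalue, including low-lying ones, has real part in $[-\max_j\bar a_j,-\min_j\bar a_j]$ is established neither by your argument nor by the paper's when the damping is nonconstant --- your observation that low-frequency eigenfunctions need not equidistribute along an edge is precisely the obstruction --- so the statement should be read with the pointwise extrema for individual eigenvalues and with the edge averages only for the abscissas.
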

\begin{proof}
We have
\[
\begin{array}{lll}
\Re(\lambda) & \geq & - \frac{\ds \sum_{j =1}^N  \esssup_{x\in(0,l_j)}\,a_j(x)  \|u_j(x)\|^2_2}{\ds\sum_{j =1}^N \|u_j(x)\|^2_2}\eqskip
& \geq & -{\ds \sup_j\esssup_{x\in (0,l_j)}}\,a_j(x) \frac{\ds \sum_{j =1}^N    \|u_j(x)\|^2_2}{\ds \sum_{j =1}^N \|u_j(x)\|^2_2}
\end{array}
\]
and similarly for the infimum. The claim about the high-frequency abscissas follows by a similar construction from
Theorems~\ref{thm-lambdar} and~\ref{thm-onlyaverage}.
\end{proof}

For certain types of graphs with standard coupling, some of the high-frequency abscissas have a simple explicit form.

\begin{theorem}\label{alla}
Let $\Gamma$ be an equilateral graph with standard coupling which contains a cycle $C$ of $N$ edges of length one with averages of damping
coefficients on all of the edges of $C$ equal to $a$. Then there is a high-frequency abscissa at $-a$. 
\end{theorem}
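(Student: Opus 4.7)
The plan is to exhibit an infinite sequence of eigenfunctions of $H$ that are supported on the loop, so that the weighted-average formula in Theorem~\ref{thm-lambdar} applies and yields $\Re\lambda = -a$ exactly, the only $\bar a_j$ contributing being those of loop edges, all equal to $a$. By Theorem~\ref{thm-onlyaverage} I may first replace the damping and potential on each edge by their averages, so that each loop edge carries the constant damping $a$ and some constant potential $\bar b_j$.

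I then characterise the loop-supported eigenfunctions. Requiring $u\equiv 0$ on every non-loop edge is compatible with the standard coupling at every vertex if and only if the restriction of $u$ to the topological circle $\Gamma_{\mathrm{loop}}$ of circumference $N$ is a $C^1$-function that vanishes at every \emph{branch vertex} (a loop vertex incident to at least one non-loop edge). Indeed, continuity forces $u=0$ on the loop side of any branch vertex, while at every loop vertex the sum-of-outward-derivatives condition contributed by the two loop edges is equivalent to $C^1$-regularity of $u$ across that point of the circle; the non-loop edges contribute zero since $u$ vanishes identically on them.

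Setting $\lambda=-a+\mathrm i\tau$ for real $\tau$, the edgewise equation becomes $u_j''=-\omega_j^{\,2}\,u_j$ with $\omega_j=\sqrt{\tau^2+a^2+\bar b_j}>0$ real, and the monodromy $T(\tau)$ around $\Gamma_{\mathrm{loop}}$ is a real-analytic $\mathrm{SL}(2,\mathbb R)$-valued function of $\tau$. The loop-supported eigenfunction problem is then captured by a real-analytic secular equation $F(\tau)=0$ obtained by combining the condition that $T(\tau)$ admit an eigenvector with eigenvalue $1$ with Dirichlet conditions at each of the $M\ge 0$ branch vertices; each real root $\tau_k\to\infty$ of $F$ produces an eigenvalue $\lambda_k=-a+\mathrm i\tau_k$ of $H$ with $\Re\lambda_k=-a$, and hence a high frequency abscissa at $-a$.

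The main obstacle is to verify that $F(\tau)$ has infinitely many real roots accumulating at $+\infty$. For the pure cycle case $M=0$ this is immediate from the spectral theorem for the self-adjoint Schr\"odinger operator $-\partial_x^{\,2}+(a^2+\bar b(x))$ on $\Gamma_{\mathrm{loop}}$, whose spectrum is discrete, real, and unbounded above. For $M\ge 1$ I would perform a leading-order analysis of $T(\tau)$ as $\tau\to\infty$ and use the fact that the branch vertices lie at integer arc-length on the unit-length loop edges: in the leading-order constant-coefficient approximation the mode $\sin(2\pi\ell x)$ vanishes at every integer point for $\ell\in\mathbb N$, producing approximate roots $\tau\approx 2\pi\ell$ of $F$; a Rouch\'e or implicit-function argument then promotes these to nearby exact real roots $\tau_\ell$, giving the required sequence of eigenvalues $\lambda_\ell=-a+\mathrm i\tau_\ell$.
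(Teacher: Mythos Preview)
The approach has a genuine gap in the $M\ge 1$ case. The loop-supported eigenfunction problem you set up---$C^1$ periodicity on the circle together with $u(p_i)=0$ at all $M$ branch vertices---is not encoded by a single scalar secular equation $F(\tau)=0$; it is a system of $M+1$ independent real conditions on the single real unknown $\tau$. Concretely, cutting at $p_1$ and writing the initial data as $(0,A)$, periodicity alone forces both $T_{12}(\tau)=0$ and $T_{22}(\tau)=1$, and each further branch vertex adds another linear constraint. When the potential averages $\bar b_j$ on the loop edges differ, these conditions are generically incompatible: for the two-edge loop with one branch vertex, solving $T_{12}(\tau)=0$ near $\tau=\pi k$ yields a root $\tau_k$ at which $T_{22}(\tau_k)-1$ is of order $(\bar b_1-\bar b_2)^3/k^{4}$, hence nonzero. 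Thus for generic $\bar b_j$ there are \emph{no} exact loop-supported eigenfunctions of the averaged problem, and a Rouch\'e or implicit-function argument cannot produce them; the leading-order coincidence $\sin(2\pi\ell x)$ vanishing at integers is destroyed, not merely perturbed.

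What actually rescues the theorem is not a perturbation argument but the stronger fact---implicit in the construction of Section~\ref{sec-seceq}---that the high frequency abscissas $c_0$ depend only on the $\bar a_j$ and not on the potentials $b_j$ at all. One may therefore also set $b_j\equiv 0$ on the loop edges, after which every $\tilde\lambda_j=\sqrt{\lambda^2+2a\lambda}$ is the same. The paper then simply writes down the explicit edge functions $f_j(x)=\alpha_j(\mathrm e^{\tilde\lambda x}-\mathrm e^{-\tilde\lambda x})$ vanishing at all loop vertices, imposes $\tilde\lambda=n\pi i$ and the derivative-matching constraint $(-1)^{nN}=1$, and reads off $c_0=-a$ directly. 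Once the $b_j$ are gone your self-adjoint circle formulation would also go through, but the detour via Theorem~\ref{thm-lambdar} and the monodromy is then unnecessary.
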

\begin{proof}
Due to Theorem~\ref{thm-onlyaverage} one can assume all damping coefficients on $C$ equal to $a$ and take $b_j = 0$ without loss of
generality. Our aim is to construct an eigenfunction of $H$ with support only on $C$ and with zeros at the vertices. Using the
Ansatz~$f_j(x) = \alpha_j(\mathrm{e}^{\tilde \lambda(\lambda)x} - \mathrm{e}^{-\tilde \lambda(\lambda)x})$ with
$\tilde\lambda(\lambda) = \sqrt{\lambda^2 + 2 a \lambda}$ on the $j$-th edge one finds
\begin{eqnarray*}
  f_j(0) = 0\,,&\quad &f_j(1) = \alpha_j (\mathrm{e}^{\tilde \lambda(\lambda)} - \mathrm{e}^{-\tilde \lambda(\lambda)})\,,\\
  f_j'(0) = 2 \alpha_j \tilde \lambda(\lambda) \,,&\quad &f_j'(1) = \alpha_j \tilde \lambda(\lambda)(\mathrm{e}^{\tilde \lambda(\lambda)} +
\mathrm{e}^{-\tilde \lambda(\lambda)})\,.\\
\end{eqnarray*}
Requiring the functional values to be zero at the vertices of the $C$ one finds $\tilde \lambda(\lambda) =  n \pi i$, the continuity of the
derivatives leads to $\alpha_{j+1} -\alpha_j (-1)^n = 0$ and thus $(-1)^{nN} = 0 $. Hence for $N$ even one has two sequences with $c_{0}^{(1)} = - a$
and $c_{0}^{(2)} = - a + \pi i$ while for $N$ odd there is a sequence with $c_{0}^{(1)}= - a$. 
\end{proof}

\begin{remark} {\rm
A similar claim can be made if a graph contains a chain of edges of the same length and same averages of damping functions with standard coupling in
the middle vertices and the end vertices belonging to the boundary and having \emph{e.g.} Dirichlet coupling. On the other hand, an edge without
damping does not by itself ensure a sequence of eigenvalues with their real parts approaching zero. This can be shown for instance for an
equilateral three-edge star graph with constant dampings on each edge $a$, $a$, and $0$; the leading term of the large $n$ asymptotics of the
corresponding secular equation gives the polynomial equation
\[
   3 \mathrm{e}^{4a} z^3 - (\mathrm{e}^{4a}+2\mathrm{e}^{2a}) z^2 - (2\mathrm{e}^{2a}+1) z +3  = 0
\]
with $z = \mathrm{e}^{2c_0^{(s)}}$. It has solutions $z_1 = \mathrm{e}^{-2a}$, $z_{2,3} = \frac{1}{6} \mathrm{e}^{-2a} (\mathrm{e}^{2a}-1 \pm
\sqrt{\mathrm{e}^{4a}+ 34\, \mathrm{e}^{2a}+1})$ which lead to sequences of eigenvalues with coefficients 
$c_0^{(1)} = -a$, $c_0^{(2,3)} = \frac{1}{2}\ln{z_{2,3}}$.
}\end{remark}

\begin{remark} {\rm
Theorem~\ref{alla} can be generalized to a larger class of coupling conditions with the matrix $U$ satisfying 
$$
  (U+I) (1,-1,1,-1,\dots,1,-1,0,\dots,0)^\mathrm{T} = 0\,,
$$
where the columns multiplied by $\pm 1$ correspond to the edges of the cycle. For these graphs the term with the derivative in the
coupling conditions disappears for eigenfunctions constructed as in the proof of the Theorem; the term with the functional values
disappears due to the fact that these vanish at the vertices.

For instance, for one loop starting and ending in the same vertex connected to a segment the unitary matrix in the central 
vertex must be of the form
$$
 U = \begin{pmatrix} 
    p & p+1 & \bar q\\
    p+1 & p & \bar q\\
    q & q & r 
    \end{pmatrix}
$$
with $p$, $q$, $r$ being complex and satisfying
$$
  r = -2p-1\,,\quad 2 |p|^2 + |q|^2 + 2\, \mathrm{Re\,}p = 0\,.
$$
Hence we have a 3-parametre set of the whole family of 9-parametre set of unitary matrices $3\times 3$.
From the last equation it also follows that $\mathrm{Re\,}p \in (-1,0)$. The choice $p=-1/3$, $q=2/3$, $r=-1/3$ gives standard conditions.
}
\end{remark}

\section{Studying the secular equation via pseudo-orbit expansions}\label{sec-orbit}
In what follows we shall use the approach based on the vertex scattering matrices and its pseudo-orbit
expansion to construct the secular equation.
We refer to the pioneering articles in the case of quantum graphs by Kottos and Smilansky~\cite{KoS} and Akkermans et al.~\cite{Ak1}, the
publication by Band, Harrison and Joyner \cite{BHJ} or the paper by Bolte and Endres where the case of general coupling was worked out \cite{BE}. 

In the case of the damped wave equation similar techniques can be used and, in a similar way to the cited works above, the orbit expansion can
help finding the secular equation. From the secular equation, one can then find the leading term of its high-frequency asymptotics (i.e. the
expansion in $n$). In this section this expansion is done in terms of scattering matrices. Furthermore, we show how it is possible to obtain
particular coefficients in the first term of this expansion using the orbit expansion.

We shall first briefly describe the method of orbit expansion. Let us replace the graph $\Gamma$ by the directed graph $\Gamma_2$, where
the $j$-th edge of $\Gamma$ is replaced by two directed edges (which, following~\cite{BHJ}, we will call \emph{bonds}) $e_j$ and $\hat{e}_j$,
both of length $l_j$. Using the Ansatz
\begin{eqnarray*}
   f_{e_j} (x) = \alpha^{\mathrm{in}}_{e_j}\mathrm{e}^{\tilde \lambda_{j} x} + \alpha^{\mathrm{out}}_{e_j}\mathrm{e}^{-\tilde \lambda_{j} x} \,,\\
   f_{\hat{e_j}} (x) = \alpha^{\mathrm{in}}_{\hat{e}_j}\mathrm{e}^{\tilde \lambda_{j} x} +
\alpha^{\mathrm{out}}_{\hat{e}_j}\mathrm{e}^{-\tilde
\lambda_{j} x}
\end{eqnarray*}
and the relation $f_{e_j} (x) = f_{\hat{e}_j} (l_j - x)$ one has
$$
   \alpha_{\hat{e}_j}^{\mathrm{out}} = \mathrm{e}^{\tilde \lambda_j l_j} \alpha_{e_j}^{\mathrm{in}}\,, \qquad
   \alpha_{e_j}^{\mathrm{out}} = \mathrm{e}^{\tilde \lambda_j l_j} \alpha_{\hat{e}_j}^{\mathrm{in}}\,.
$$
The interpretation of the incoming and outgoing waves for this choice cor\-res\-ponds to $\tilde \lambda_j$ as the positive square
root of $\lambda^2 + 2 a_e \lambda - b_e$. It can be compared to the case of quantum graphs, where one has $\mathrm{e}^{ikx}$ 
and $\mathrm{e}^{-ikx}$ with $k^2 = E$ being the spectral parameter. Here $k$ corresponds to $-\Im(\lambda)$.
We define the vertex scattering matrix $\sigma_v (\lambda)$ by
$\vec{\alpha}_v^{\mathrm{out}} = \sigma_v (\lambda)\vec{\alpha}_v^{\mathrm{in}}$ with $\vec{\alpha}_v^{\mathrm{in}} =
(\alpha_{e_{v1}}^{\mathrm{in}},\dots,\alpha_{e_{vd}}^{\mathrm{in}})^\mathrm{T}$ and $\vec{\alpha}_v^{\mathrm{out}} =
(\alpha_{e_{v1}}^{\mathrm{out}},\dots,\alpha_{e_{vd}}^{\mathrm{out}})^\mathrm{T}$. 

The matrix $\Sigma(\lambda)$ is defined by the equation
$$
  \vec{\alpha}^{\mathrm{out}} = \Sigma(\lambda)\vec{\alpha}^{\mathrm{in}}\,,
$$
where
\begin{eqnarray*}
\vec{\alpha}^{\mathrm{out}} = (\alpha_{e_1}^{\mathrm{out}},\dots, \alpha_{e_N}^{\mathrm{out}},\alpha_{\hat{e}_1}^{\mathrm{out}},\dots,
\alpha_{\hat{e}_N}^{\mathrm{out}})^\mathrm{T} \,,
\\
\vec{\alpha}^{\mathrm{in}} = (\alpha_{e_1}^{\mathrm{in}},\dots, \alpha_{e_N}^{\mathrm{in}},\alpha_{\hat{e}_1}^{\mathrm{in}},\dots,
\alpha_{\hat{e}_N}^{\mathrm{in}})^\mathrm{T}\,.
\end{eqnarray*}
The matrix $\Sigma(\lambda)$ is block-diagonalizable via the similarity transformation $W\Sigma(\lambda)W^{-1}$. Here
the matrix $W$ maps the vector $\vec{\alpha}^{\mathrm{in}}$ to the vector
$$
  \left[(\vec{\alpha}_{v_1}^{\mathrm{in}})^\mathrm{T},(\vec{\alpha}_{v_2}^{\mathrm{in}})^\mathrm{T}, \dots,
  (\vec{\alpha}_{v_{|\mathcal{V}|}}^{\mathrm{in}})^\mathrm{T}\right]^\mathrm{T}\,,
$$ 
where $|\mathcal{V}|$ is the number of vertices in the graph. This similarity transformation only rearranges rows and columns of $\Sigma(\lambda)$
and the blocks of the matrix $W\Sigma(\lambda)W^{-1}$ are $\sigma_{v_j}(\lambda)$.

Finally, we define 
$$
 J = \begin{pmatrix}0& I\\I & 0\end{pmatrix}\quad\mathrm{and}\quad L =\mathrm{exp}\left(\mathrm{diag\,}(-\tilde \lambda_1 l_1,\dots, -\tilde \lambda_N
l_N,-\tilde \lambda_1 l_1,\dots, -\tilde \lambda_N l_N)\right)
$$ 
and obtain
$$
  \begin{pmatrix}\vec{\alpha}_e^{\mathrm{in}}\\\vec{\alpha}_{\hat{e}}^{\mathrm{in}} \end{pmatrix} = L
\begin{pmatrix}\vec{\alpha}_{\hat{e}}^{\mathrm{out}}\\\vec{\alpha}_{e}^{\mathrm{out}} \end{pmatrix} = L J
\begin{pmatrix}\vec{\alpha}_{e}^{\mathrm{out}}\\\vec{\alpha}_{\hat{e}}^{\mathrm{out}} \end{pmatrix} =  L J
\Sigma(\lambda)\begin{pmatrix}\vec{\alpha}_e^{\mathrm{in}}\\\vec{\alpha}_{\hat{e}}^{\mathrm{in}} \end{pmatrix} 
$$
Hence the secular equation is
\begin{equation}
   \mathrm{det\,}\left(L J \Sigma(\lambda) - I_{2N\times 2N} \right) = 0\,.\label{eq-seceq}
\end{equation}
We further denote $S(\lambda) = J \Sigma(\lambda)$.

The following proposition compares the vertex scattering matrix behaviour in $1/n$ expansion with the coupling matrix for this particular vertex.
\begin{proposition}(Vertex scattering matrix behaviour)\label{thm-scatmat}\\
Let us assume a vertex with coupling given by a unitary matrix $U$.
If $U$ is of the form (\ref{eq-u}), then
\[
\sigma(\lambda_{sn}) = V^{-1}\begin{pmatrix} -I_{n_{-}} & 0 & 0\\ 0 & I_{n_{+}} & 0 \\ 0 & 0 & I_{d-n_{-}-n_{+}}\end{pmatrix} V + \mathcal{O}(1/n) 
\] 
where $d$ is the vertex degree.
\end{proposition}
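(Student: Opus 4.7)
The plan is to derive an explicit matrix formula for $\sigma(\lambda)$ from the coupling condition~\eqref{eq-coupling} and then extract its leading-order behaviour via the spectral decomposition of $U$. Fix a vertex $v$ of degree $d$ and parametrize each incident edge so that $x=0$ lies at $v$. With the Ansatz $f_e(x) = \alpha_e^{\mathrm{in}} e^{\tilde\lambda_e x} + \alpha_e^{\mathrm{out}} e^{-\tilde\lambda_e x}$, the boundary data at $v$ are $\Psi_e = \alpha_e^{\mathrm{in}} + \alpha_e^{\mathrm{out}}$ and $\Psi'_e = \tilde\lambda_e(\alpha_e^{\mathrm{in}} - \alpha_e^{\mathrm{out}})$. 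Writing $T = \mathrm{diag}(\tilde\lambda_e)$ and substituting into \eqref{eq-coupling} yields the closed form
$$\sigma(\lambda) = -\bigl[(U-I) - i(U+I)T\bigr]^{-1}\bigl[(U-I) + i(U+I)T\bigr].$$
The basic asymptotic input is $\tilde\lambda_e = \sqrt{\lambda^2 + 2a_e\lambda - b_e} = \lambda + a_e + O(1/\lambda)$, so $T = \lambda I + A_v + O(1/\lambda)$ with $A_v = \mathrm{diag}(a_e)$ bounded, and $1/|\lambda| = O(1/n)$ along the sequences of interest.

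Next, conjugate by $V$. Since $V(U\pm I)V^{-1}$ equals $\mathrm{diag}(-2I_{n_-},0,D-I)$ and $\mathrm{diag}(0,2I_{n_+},D+I)$ respectively, the first block row of $V(U+I)TV^{-1}$ vanishes \emph{identically} regardless of $T$. Consequently $V[(U-I)\mp i(U+I)T]V^{-1}$ has exact top block row $(-2I_{n_-},0,0)$; the resulting block lower-triangular structure makes block inversion trivial in the top corner and delivers the first $n_-$ rows of $V\sigma(\lambda) V^{-1}$ as $(-I_{n_-},0,0)$ exactly, with no $O(1/n)$ error. In the remaining $(n_+ + n_0)\times(n_+ + n_0)$ block (with $n_0 = d - n_- - n_+$), the diagonal sub-blocks of $V[(U-I)\mp i(U+I)T]V^{-1}$ are $\mp 2i\lambda I_{n_+} + O(1)$ and $\mp i\lambda(D+I) + O(1)$, while the off-diagonal sub-blocks, arising only from $V(U+I)V^{-1}\cdot VA_vV^{-1}$, are $O(1)$. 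Since $D+I$ is invertible (the spectrum of $D$ avoids $\{-1\}$ by construction), a Neumann expansion inverts the diagonal part at order $1/\lambda$ and pushes off-diagonal coupling into order $1/\lambda^2$. Multiplying by the analogous expansion of the right-hand factor gives this block of $V\sigma(\lambda) V^{-1}$ as $\mathrm{diag}(I_{n_+},I_{n_0}) + O(1/\lambda)$. Conjugating back and using $1/|\lambda| = O(1/n)$ then yields the stated formula.

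The main obstacle is the bookkeeping needed to see that off-diagonal coupling between the $n_+$ and $n_0$ sub-spaces, induced by $VA_vV^{-1}$ (which is generically not block diagonal even though $A_v$ itself is), enters only at order $1/\lambda^2$ in the inverse. The Neumann estimate above closes this gap because the diagonal sub-blocks scale like $\lambda$ while the off-diagonal ones remain $O(1)$, provided $D+I$ is invertible — which holds precisely because the $\pm 1$ eigenvalues of $U$ were split off into the $n_\pm$ sectors. A pleasant byproduct of the block-triangular structure is that the $n_-$ piece of $\sigma$ is captured with no asymptotic error at all.
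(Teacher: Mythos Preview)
Your argument is correct and takes a genuinely different route from the paper's. The paper posits a formal expansion $\sigma(\lambda)=\sigma_0+\tfrac{1}{n}\sigma_1+\dots$, substitutes into the relation~\eqref{eq-sigma}, and reads off $\sigma_0$ by matching the $n^1$ and $n^0$ coefficients in the basis diagonalising $U$; along the way it invokes unitarity of $\sigma_0$ to kill the off-diagonal blocks $\sigma_0^2,\sigma_0^3$ (a step which, incidentally, is not strictly needed since the top block row of~\eqref{eq-sigma2} already forces $\sigma_0^2=\sigma_0^3=0$ and $\sigma_0^1=-I$). You instead write down the closed form $\sigma=-[(U-I)-i(U+I)T]^{-1}[(U-I)+i(U+I)T]$, observe that after conjugation by $V$ both factors are block lower triangular with top row $(-2I_{n_-},0,0)$, and then apply a Neumann-series estimate to the remaining block. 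This buys you two things the paper's argument does not: you avoid any appeal to unitarity of the leading term, and you obtain the pleasant exactness of the $n_-$ block row. The paper's order-matching scheme, on the other hand, is better suited to computing the higher coefficients $\sigma_1,\sigma_2,\dots$ systematically, should one need them.

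One small point you left implicit: you treat the top $n_-$ rows and the lower-right $(n_++n_0)\times(n_++n_0)$ block, but the lower-left $(n_++n_0)\times n_-$ block of $V\sigma V^{-1}$ also needs to be $\mathcal{O}(1/n)$. This follows immediately from your own ingredients---that block equals $-2i(P_{22}-iQ_{22})^{-1}Q_{21}$ in the obvious notation, with $(P_{22}-iQ_{22})^{-1}=\mathcal{O}(1/\lambda)$ from your Neumann estimate and $Q_{21}=\mathcal{O}(1)$ for the same reason the $(n_+,n_0)$ off-diagonals are---but it is worth stating explicitly.
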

\begin{proof}
The coupling condition becomes 
$$
  (U -I) (\vec{\alpha}^{\mathrm{in}}+\vec{\alpha}^{\mathrm{out}}) + i (U+ I)(2\pi i n I + \Lambda_2)
(\vec{\alpha}^{\mathrm{in}}-\vec{\alpha}^{\mathrm{out}}) = 0\,,
$$
with $\Lambda_2$ being the diagonal matrix with entries $c_0^{(s)}+ a_j + \mathcal{O}(1/n)$ and with $U$ being the coupling matrix 
defined in Section~\ref{sec-description}. This is due to the high-frequency asymptotics of 
$\tilde \lambda_j(\lambda)$. The previous equation leads to the relation
\begin{equation}
   [(U-I)+ 2 \pi n (U+I) - i (U+I)\Lambda_2] \sigma(\lambda_{sn}) = - [(U-I)-2 \pi n (U+I) + i (U+I) \Lambda_2] \label{eq-sigma}.
\end{equation}
Using the expansion of $\sigma(\lambda) = \sigma_0 + \frac{1}{n}\sigma_1 + \frac{1}{n^2} \sigma_2+ \mathcal{O}(1/n^3)$ in $n$
one finds from the first two terms
\begin{eqnarray}
  2 \pi n (U+I) (\sigma_0 - I) = 0\,,\label{eq-sigma1}\\
  (U -I) (\sigma_0 + I) + 2 \pi (U+I) \sigma_1 + i (U+I) \Lambda_2 (I - \sigma_0) = 0\,.\label{eq-sigma2}
\end{eqnarray}
For simplicity, we study the whole problem in the basis in which $U$ is diagonal. Equation~(\ref{eq-sigma1}) implies the following form 
 for the constant (in $\lambda_{sn}$) part of $\sigma$
$$
  \sigma_0 = \begin{pmatrix}\sigma_0^1 & \sigma_0^2 & \sigma_0^3 \\ 0 & I & 0\\ 0 & 0 & I  \end{pmatrix}\,.
$$
The fact that $\sigma_0$ is unitary yields $\sigma_0^2 = \sigma_0^3 = 0$ and since all upper blocks of the second and third terms in
(\ref{eq-sigma2}) vanish, we have $\sigma_0^1 = -I_{n_{-}}$. Rewriting $\sigma(\lambda)$ into the appropriate
basis we obtain the claim.
\end{proof}

We use the same terminology as in \cite{BHJ}.
\begin{definition}
A \emph{periodic orbit} is a closed trajectory on the graph $\Gamma_2$. An \emph{irreducible pseudo orbit} $\bar\gamma$ is a collection of
periodic orbits where 
none of the directed bonds is contained more than once. Let $m_{\bar{\gamma}}$ denote the number of periodic orbits in $\bar{\gamma}$,
$L_{\bar{\gamma}} = \sum_{e \in \bar{\gamma}}\tilde\lambda_e l_e$ where the sum is over all directed bonds in $\bar{\gamma}$ and
$\tilde\lambda_e = \sqrt{\lambda^2 + 2 a_e \lambda - b_e}$. The coefficients
$A_{\bar{\gamma}} = \Pi_{\gamma_j \in \bar{\gamma}} A_{\gamma_j}$ with $A_{\gamma_j}$ given as multiplication of entries of $S (\lambda)$ along the
trajectory $\gamma_j$. 
\end{definition}

\begin{theorem}
The secular equation for the damped wave equation on a metric graph is given by 
$$
  \sum_{\bar \gamma} (-1)^{m_{\bar{\gamma}}} \, A_{\bar{\gamma}}(\lambda) \,\mathrm{exp}(-L_{\bar{\gamma}}(\lambda))= 0\
$$
with $L_{\bar{\gamma}}$ being the sum of the lengths of all directed bonds along a particular irreducible pseudo orbit $\bar{\gamma}$.
\end{theorem}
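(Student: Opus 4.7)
\smallskip
\noindent\textbf{Proof proposal.}
The plan is to derive the pseudo orbit formula directly from the secular determinant
$\det\!\bigl(LJ\Sigma(\lambda)-I_{2N\times 2N}\bigr)=0$ via a Leibniz expansion, exactly along the lines
of~\cite{BHJ,BE}, with the only change that the free propagation phases $\mathrm{e}^{ikl_e}$
are replaced by the damped factors $\mathrm{e}^{-\tilde\lambda_e l_e}$. Writing
$S(\lambda)=J\Sigma(\lambda)$ and noting that $2N$ is even, the secular equation is equivalent to
$\det(I-LS(\lambda))=0$, so it suffices to prove the identity
$$
\det\!\bigl(I_{2N}-LS(\lambda)\bigr)=\sum_{\bar\gamma}(-1)^{m_{\bar\gamma}}\,A_{\bar\gamma}(\lambda)\,
\exp\!\bigl(-L_{\bar\gamma}(\lambda)\bigr),
$$
with the sum running over irreducible pseudo orbits on $\Gamma_2$.

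First I would apply the standard expansion of the characteristic polynomial via principal minors,
$$
\det(I-M)=\sum_{k=0}^{2N}(-1)^k\!\sum_{\substack{A\subseteq E(\Gamma_2)\\|A|=k}}\det\bigl(M_A\bigr),
$$
where $M=LS$ and $M_A$ denotes the principal submatrix indexed by the directed-edge set $A$. Applying
the Leibniz formula edge-by-edge, $\det(M_A)=\sum_{\sigma\in\mathrm{Sym}(A)}\mathrm{sgn}(\sigma)\prod_{e\in A}M_{e,\sigma(e)}$.
Each permutation $\sigma$ of $A$ decomposes into disjoint cycles, and since $M_{e',e}=\mathrm{e}^{-\tilde\lambda_e l_e}\,S_{e',e}$
is nonzero only when the outgoing amplitude on directed edge $e$ becomes the incoming amplitude on
$e'$ at the common vertex, the nonzero terms correspond precisely to closed oriented walks on
$\Gamma_2$ whose directed bonds are exactly $A$ (each used once). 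A single cycle of $\sigma$ is a periodic
orbit $\gamma_j$, and the product along it equals $A_{\gamma_j}\exp\!\bigl(-\sum_{e\in\gamma_j}\tilde\lambda_e l_e\bigr)$
by the definition of $A_{\gamma_j}$; collecting the cycles and using multiplicativity gives precisely
$A_{\bar\gamma}\,\exp(-L_{\bar\gamma})$ for the associated irreducible pseudo orbit $\bar\gamma$.

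Next, I would account for the signs. A permutation of $A$ with $m_{\bar\gamma}$ cycles has sign
$(-1)^{|A|-m_{\bar\gamma}}$, so each pseudo orbit contributes with prefactor
$(-1)^{|A|}\cdot(-1)^{|A|-m_{\bar\gamma}}=(-1)^{m_{\bar\gamma}}$, independent of~$|A|$. The empty
subset $A=\emptyset$ contributes $+1$, which is the trivial (empty) pseudo orbit with
$m_{\bar\gamma}=0$, $A_{\bar\gamma}=1$, $L_{\bar\gamma}=0$, and is thereby consistent with the formula.
Summing over all $A$ and all cycle decompositions $\sigma$ is, by construction, the same as summing over
all irreducible pseudo orbits $\bar\gamma$, yielding the claimed identity.

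The main obstacle is a careful book-keeping step rather than a conceptual one: one must verify that
the correspondence ``subset $A$ plus permutation $\sigma$'' $\leftrightarrow$ ``irreducible pseudo
orbit $\bar\gamma$'' is a bijection, and that the matrix element $M_{e',e}=L_{e'e'}\,(J\Sigma)_{e',e}$
indeed implements the geometric transition ``propagate along $e$ then scatter at the endpoint of $e$
into the directed edge $e'$.'' The role of $J$ is crucial here: $\Sigma$ maps incoming-at-vertex
amplitudes to outgoing-at-vertex amplitudes, and $J$ identifies the outgoing wave on $e$ with the
incoming wave on the reversed directed edge $\hat e$, so that the composition $LS=LJ\Sigma$ correctly
propagates ``incoming to vertex'' amplitudes into ``incoming to next vertex'' amplitudes along
$\Gamma_2$. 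Once this geometric interpretation is in place the combinatorial identity above closes
the proof.
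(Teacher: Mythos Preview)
Your proposal is correct and follows essentially the same route as the paper's own proof, which simply refers to Theorem~1 of~\cite{BHJ} and notes that the phases $\mathrm{e}^{ikl_{\bar\gamma}}$ are replaced by $\mathrm{e}^{-L_{\bar\gamma}(\lambda)}$; your expansion via principal minors is just a slightly different bookkeeping of the same Leibniz expansion of $\det(LS-I)$, and your sign computation $(-1)^{|A|}\cdot(-1)^{|A|-m_{\bar\gamma}}=(-1)^{m_{\bar\gamma}}$ matches the paper's remark that the $-1$'s from the diagonal of $-I$ cancel part of the permutation sign. One minor slip: you write $M_{e',e}=\mathrm{e}^{-\tilde\lambda_e l_e}S_{e',e}$, but since $L$ multiplies on the left this should be $\mathrm{e}^{-\tilde\lambda_{e'} l_{e'}}S_{e',e}$ (as you in fact write correctly later as $L_{e'e'}(J\Sigma)_{e',e}$); this is harmless because along a closed orbit the product of propagation factors is the same either way.
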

\begin{proof}
The proof is similar to that of Theorem~1 in~\cite{BHJ}, except that the role of $\mathrm{exp}(i k l_{\bar{\gamma}})$ is replaced by
$\mathrm{exp}(-L_{\bar{\gamma}}(\lambda))$. Its main essence is that nontrivial contributions to the determinant of (\ref{eq-seceq}) are given by such
permutations which correspond to irreducible pseudo orbits. Each pseudo orbit of length $s$ corresponds to taking $s$ entries of the matrix
$S(\lambda)$ and $2N-s$ entries of $-I$. Since the number of $-1$'s coming from the entries of this last matrix cancels with a part of the
sign contribution of the permutation coming from the first term, the sign of a given contribution is given by the number of periodic orbits
in~$\bar{\gamma}$.
\end{proof}

\begin{theorem}\label{thm-damprate}
Let $\Gamma$ be a graph with given damping coefficients $a_j(x)$ and potentials $b_j(x)$. At a fixed vertex $v$ we assume a general coupling
given by the matrix~$U$.
\begin{enumerate}
\item[i)] If $-1 \not \in \sigma(U)$ then the abscissa count and high-frequency abscissas will not change if $U$ is replaced by $U_\mathrm{N} = I$,
\emph{i.e.} we separate all edges with Neumann coupling.
\item[ii)] If there is $\delta$-coupling of strength $\alpha\in \mathbb{R}$ ($U = \frac{2}{d+ i\alpha}J -I$) then the abscissa count
and high-frequency
abscissas will not change if $U$ is replaced by standard coupling ($\alpha = 0$). We emphasize that the case $\alpha = \infty$, \emph{i.e.} fully
separated case, is not included. 
\end{enumerate}
\end{theorem}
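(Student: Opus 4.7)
The plan is to invoke Proposition~\ref{thm-scatmat} and reduce both parts to the observation that the leading vertex scattering matrix $\sigma_0$ at $v$ is the same for $U$ as for the proposed replacement. Equation~(\ref{eq-seceq}) depends on the coupling only through $\Sigma(\lambda)$, while the remaining ingredients ($L$, $J$, and the averaged exponentials $\mathrm{e}^{-\bar a_e l_e}$ that survive in the high-frequency limit) are untouched by a change of coupling at a single vertex. Consequently, if $\sigma_0$ is preserved, then so is the leading asymptotic form of the secular equation, and therefore both the set of high frequency abscissas and the abscissa count agree.

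For part~(i), the hypothesis $-1\notin\sigma(U)$ gives $n_-=0$, and Proposition~\ref{thm-scatmat} forces $\sigma_0=V^{-1}I_dV=I_d$. The Neumann coupling $U_\mathrm{N}=I$ has every eigenvalue equal to $+1$ and clearly also produces $\sigma_0=I_d$, so the replacement at $v$ preserves $\sigma_0$. To verify the cited example, I would compute the spectrum of the $\delta_\mathrm{s}'$-coupling matrix $U=I-\frac{2}{d-i\beta}J$: since $J$ has eigenvalues $d$ on the constant vector and $0$ on its complement, the eigenvalues of $U$ are $(-d-i\beta)/(d-i\beta)$ and $1$, and the first equals $-1$ only for $\beta=0$. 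Hence $-1\notin\sigma(U)$ whenever $\beta\in\mathbb{R}\setminus\{0\}$, so part~(i) applies.

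For part~(ii), the matrix $U=\frac{2}{d+i\alpha}J-I$ has the normalized constant vector as an eigenvector with eigenvalue $(d-i\alpha)/(d+i\alpha)$, while its orthogonal complement (the kernel of $J$) is the $(-1)$-eigenspace, of dimension $d-1$. Standard coupling ($\alpha=0$) shares exactly the same eigenvectors but assigns eigenvalue $+1$ to the constant direction. In both cases $n_-=d-1$ with identical $(-1)$-eigenspace, so one may fix the same unitary $V$ and obtain
$$
 \sigma_0 \;=\; V^{-1}\begin{pmatrix}-I_{d-1} & 0\\ 0 & 1\end{pmatrix}V \;=\; \tfrac{2}{d}J - I,
$$
independently of $\alpha\in\mathbb{R}$. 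The limit $\alpha=\infty$ is excluded precisely because there the isolated eigenvalue degenerates to $-1$, raising $n_-$ to $d$ and changing $\sigma_0$ to $-I$, i.e.\ the fully separated Dirichlet case.

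The main obstacle is making rigorous the passage ``equal $\sigma_0$ at every vertex yields equal high frequency abscissas with equal multiplicities''. The path I would follow is to substitute the Ansatz $\lambda_n=2\pi i n+c_0+\mathcal{O}(1/n)$ into~(\ref{eq-seceq}) and isolate the leading term in $1/n$; this produces a polynomial-type equation in $\mathrm{e}^{c_0}$ whose coefficients are built from the $\sigma_0$'s at all vertices together with the averaged exponentials $\mathrm{e}^{-\bar a_e l_e}$, and is manifestly unchanged by the replacement of coupling at $v$. A Rouch\'e-type perturbation argument on this leading equation, viewed as a polynomial in $\mathrm{e}^{c_0}$, then shows that the $\mathcal{O}(1/n)$ corrections in $\sigma(\lambda)$ cannot create or destroy accumulation points of the eigenvalue sequences, and so the abscissa counts agree as well.
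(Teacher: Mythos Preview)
Your proposal is correct and follows essentially the same approach as the paper: both arguments invoke Proposition~\ref{thm-scatmat} to show that the leading term $\sigma_0$ of the vertex scattering matrix coincides for $U$ and its replacement (equal to $I_d$ in case~(i), and equal to $\tfrac{2}{d}J-I$ in case~(ii)), whence the leading asymptotic secular equation is unchanged. Your treatment is in fact more detailed than the paper's, which leaves the verification for the $\delta'_s$-coupling, the explicit form of $\sigma_0$ in~(ii), and the passage from equal $\sigma_0$ to equal abscissas to the reader.
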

\begin{proof}
\begin{enumerate}
\item[i)] According to Proposition~\ref{thm-scatmat} the first term of the asymptotics of the vertex scattering matrix equals
the identity matrix. Hence the first term of the secular equation is the same as in
the case of Neumann coupling. It is straightforward to verify that the
$\delta'_\mathrm{s}$-coupling matrix does not have $-1$ in its spectrum.
\item[ii)] The $\delta$-coupling matrix has one eigenvalue equal to $\frac{d-i\alpha}{d+i\alpha}$ and an eigenvalue $-1$ with multiplicity
$d-1$. Hence, according to proposition~\ref{thm-scatmat}, the first term in the expansion of the vertex scattering matrix is the same as 
that in the expansion of the vertex scattering matrix for the case of the standard coupling.
\end{enumerate}
\end{proof}
The cases considered in i) include \emph{e.g.} $\delta'_\mathrm{s}$-condition ($U = I -
\frac{2}{d-i \beta}$) with strength $\beta\in \mathbb{R}\backslash\{0\}$.

\section{Number of distinct high-frequency abscissas}\label{sec-number}
Our next aim is to bound the number of sequences of eigenvalues corresponding to different 
high-frequency abscissas. We begin by stating the main theorem of this section, whose proof is based on several lemmata and given
at the end of the section. 

To describe the distribution of eigenvalues and compare it with the two-dimensional results we define the following probability measure (see also
\emph{e.g.} \cite{AL,Sj}).
\begin{definition}
Let $I$ be an open interval in $\mathbb{R}$ and $R >0$. Then we define the probability distribution $\mu_R(I)$ by
$$
  \mu_R(I) = \frac{\#\{\lambda: \mathrm{Re\,}\lambda \in I, |\mathrm{Im\,}\lambda|< R\}}{\#\{\lambda: |\mathrm{Im\,}\lambda|< R\}}\,.
$$
We define $\mu_\infty (I)$ by ${\ds\lim_{R\to \infty}} \mu_R(I)$.
\end{definition}

We define a bipartite graph and state its main property. For more details see e.g.~\cite{Di}.
\begin{definition}
A graph is \emph{bipartite} if it admits partition into two classes such that every edge ends in different classes.
\end{definition}

\begin{proposition}
A graph is bipartite iff it does not have any closed cycle of odd length.
\end{proposition}

The main result of the paper can be written in terms of $\mu_\infty(I)$ in the following way.
\begin{theorem}\label{thm-main}
Let $\Gamma$ be an equilateral graph with $N$ edges of length one, with coupling given by~(\ref{eq-coupling}) and with damping and
potential functions $a_j(x)\in C^1[0,1]$, $b_j(x)\in C^0[0,1]$, possibly discontinuous at the vertices. 
\begin{enumerate}
\item[i)] The measure $\mu_\infty$ is atomic with atoms with measures given by $\frac{m_i}{2N}$ with $m_i$ being a positive
integer. The number of atoms is at most $2N$ and ${\ds \sum_{i=1}^{\alpha_c} m_{i} = 2N}$.
\item[ii)] If the graph is bipartite with Robin coupling on the boundary and standard coupling otherwise, then all $m_{i}$'s are even.
\item[iii)] For a tree graph with Robin coupling on the boundary and standard coupling otherwise, having all vertices of odd
degree, there always exists a damping for which the maximum possible number of~$N$ atoms allowed by i) and ii) is
achieved.
\end{enumerate}
\end{theorem}

\begin{remark}\label{rem-polyeq}{\rm
We present a systematic procedure to determine the polynomial whose solutions yield the high-frequency abscissas.
\begin{enumerate}
\item Given a graph with commensurate edges, introduce new vertices of degree two to make the graph equilateral;
\item For an equilateral graph with edges of length $l_0$ use the construction of Lemma~\ref{lem-l0} to obtain an equilateral graph
with unit lengths;
\item Replace the damping on each edge by its average and make all potentials equal to zero (using Theorem~\ref{thm-onlyaverage});
\item Construct a secular equation by first constructing two independent solutions $\sinh{(\tilde \lambda_j(\lambda) x)}$
and $\cosh{(\tilde \lambda_j(\lambda) x)}$ on each edge and then use the coupling conditions~(\ref{eq-coupling});
\item Determine the leading term of the secular equation and replace $\tilde\lambda_j(\lambda)$ by $c_0^{(s)}+\bar a_j$ in the arguments
of the hyperbolic functions;
\item Determine the roots $z_i$ of the corresponding polynomial equation in $z = \mathrm{e}^{c_0^{(s)}}$ and $\omega_0$, the real parts of $c_0^{(s)}$, as
$\ln{|z_i|}$.
\end{enumerate}
}\end{remark}

For the proof of the Theorem~\ref{thm-main} we will use the following lemmata. First, we give an upper bound on the number of these sequences for
an equilateral graph with general coupling conditions. 
\begin{lemma}\label{thm-seq1}
Let $\Gamma$ be an equilateral graph with $N$ edges of the length $1$. Let us assume a damped wave equation on $\Gamma$ with damping and potential
functions constant on each edge $a_j(x)\equiv a_j$, $b_j(x) \equiv b_j$ and with general coupling given by~(\ref{eq-coupling}) for
a given unitary matrix $U$. Then
there exist numbers $n_0 \in \mathbb{N}$, $c_0^{(s)}\in \mathbb{C}$, $s = 1, \dots, 2N$ and $c_1 \in \mathbb{R}$ such that for every $n \geq n_0$ all
eigenvalues of $H$ are within the following set
$$
  \{\lambda, |\mathrm{Im\,}\lambda| \leq 2 \pi n_0\}\cup \bigcup_{s=1}^{2N} \bigcup_{n = n_0}^{\infty} B\left(2\pi n i
+c_{0}^{(s)},\frac{c_1}{n}\right)\,,
$$
where $B(x_0,r)$ denotes the disc in the complex plane with center $x_0$ and radius~$r$.
\end{lemma}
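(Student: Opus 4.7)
The plan is to convert the secular equation from~(\ref{eq-seceq1}) into a family of equations $f_n(c)=0$ parametrised by $n\in\mathbb{N}$, via the substitution $\lambda = 2\pi i n + c$, and then apply a Rouch\'e argument to localise the zeros of $f_n$ near the (finitely many) zeros of the leading term.

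First I would specialise the flower-like secular equation to the constant-coefficient case. Here the fundamental solutions of (\ref{eq-ldw1}) are the exact exponentials $u_{j\pm}(x,\lambda) = \mathrm{e}^{\pm \tilde\lambda_j(\lambda) x}$ with $\tilde\lambda_j(\lambda) = \sqrt{\lambda^2 + 2 a_j \lambda - b_j}$, so that $\tilde\lambda_j(\lambda) = \lambda + a_j + \mathcal{O}(1/\lambda)$ for $|\mathrm{Im}\,\lambda|$ large, and $u_{j\pm}(1,\lambda) = \mathrm{e}^{\pm(\lambda+a_j)}(1 + \mathcal{O}(1/\lambda))$. Using Corollary~\ref{corbound}, $\mathrm{Re}\,\lambda$ is \emph{a priori} bounded, so all sequences of eigenvalues with $|\mathrm{Im}\,\lambda|\to\infty$ can be written as $\lambda = 2\pi i n + c$ with $n\in\mathbb{Z}$ and $c$ confined to a strip $|\mathrm{Re}\,c|\leq M$, $|\mathrm{Im}\,c|\leq \pi$.

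Next I would substitute $\lambda = 2\pi i n + c$ into the secular determinant, use $\mathrm{e}^{\pm 2\pi i n} = 1$, and factor out the leading power $(2\pi i n)^{2N-n_-}$ coming from the $\lambda$-linear part of $M_2$ (exactly as already identified in the display preceding Theorem~\ref{thm-onlyaverage}). This produces
\[
\det\bigl[(U-I)M_1(\lambda) + i(U+I)M_2(\lambda)\bigr] = (2\pi i n)^{2N-n_-}\bigl[P(c) + R_n(c)\bigr],
\]
with $P(c)$ independent of $n$ and $R_n(c) = \mathcal{O}(1/n)$ uniformly on the strip. The function $P$ is built from $2\times 2$ blocks whose entries are $1$ or $\mathrm{e}^{\pm(c+a_j)}$, so $\mathrm{e}^{Nc}P(c)$ is a polynomial in $z=\mathrm{e}^{c}$ of degree at most $2N$. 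It therefore has at most $2N$ zeros $z^{(s)}$, giving at most $2N$ values $c_{0}^{(s)} = \log z^{(s)}$ in the strip.

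Finally, Rouch\'e's theorem applied to $f_n(c):=P(c)+R_n(c)$ against $P(c)$ on circles $|c-c_{0}^{(s)}|=c_1/n$ will yield the claim. On each such circle, for $c_{0}^{(s)}$ a simple zero, $|P(c)|\geq \tfrac{1}{2}|P'(c_{0}^{(s)})|\cdot c_1/n$, which exceeds $|R_n(c)|=\mathcal{O}(1/n)$ once $c_1$ is chosen large enough; while away from all the $c_{0}^{(s)}$, $P$ is bounded below by a positive constant, so $f_n$ has no zeros there for $n\geq n_0$. Translating back through $\lambda = 2\pi i n + c$ gives the stated inclusion. The main obstacle is the possibility of multiple roots of $P$: a root of multiplicity $k\geq 2$ only forces zeros within radius $c_1 n^{-1/k}$, so to retain the $c_1/n$ rate one either enumerates each root of $P$ according to its multiplicity (using up several of the $2N$ slots $c_{0}^{(s)}$ for the same point and noting that in a slightly larger ball $P$ still dominates $R_n$), or observes that generically all roots are simple. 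A minor technical point is ensuring that no root of $P$ lies exactly on the boundary $\mathrm{Im}\,c = \pm \pi$, which can always be arranged by shifting the strip.
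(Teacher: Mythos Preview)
Your approach is essentially the same as the paper's in structure: both reduce the secular equation, via the substitution $\lambda = 2\pi i n + c$, to a leading term that is a polynomial of degree $2N$ in $y=\mathrm{e}^{c}$, and then read off the at most $2N$ values $c_0^{(s)}$ from its roots. The differences are in packaging: the paper obtains the polynomial via the pseudo orbit expansion of Section~\ref{sec-orbit} (each irreducible pseudo orbit on the directed graph with $2N$ bonds contributes a monomial $y^{l_{\bar\gamma}}$), whereas you work directly with the $M_1,M_2$ determinant from Section~\ref{sec-seceq}; and the paper simply asserts the expansion $\lambda_{ns}=2\pi i n + c_0^{(s)} + \mathcal{O}(1/n)$ from Section~\ref{sec-seceq} and then picks $c_1$ as twice the maximum of the $1/n$--coefficients, while you make the localisation step explicit via Rouch\'e. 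Your route is more self-contained and avoids forward reference to the orbit machinery; the paper's route makes the degree bound $2N$ more transparent combinatorially.

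On the multiple-root obstacle you flag: you are right that a bare Rouch\'e comparison of $P+R_n$ against $P$ near a $k$-fold zero only traps the perturbed roots in a disk of radius $\mathcal{O}(n^{-1/k})$, and neither of your proposed fixes (listing the root with multiplicity, or appealing to genericity) recovers the $c_1/n$ rate. The paper does not address this point either; it simply asserts the $\mathcal{O}(1/n)$ form of the expansion coming from Section~\ref{sec-seceq}. So this is a shared lacuna rather than a defect of your argument relative to the paper's. If you want to close it, the natural route is to exploit that in the constant-coefficient case the full secular determinant is an \emph{exact} polynomial in the variables $\mathrm{e}^{\tilde\lambda_j}$ with $\tilde\lambda_j=\sqrt{\lambda^2+2a_j\lambda-b_j}$, so that the remainder $R_n(c)$ itself has a complete expansion in powers of $1/n$; one can then iterate the root-finding order by order rather than rely on a single Rouch\'e step.
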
 
\begin{proof}
As follows from the construction in Section~\ref{sec-seceq}, high $n$ asymptotics of the eigenvalues are given by equation~(\ref{eq-asymptot}).
We choose $c_1$ as twice the maximum of modul\ae\ of the coefficients by the $\mathcal{O}(1/n)$ for particular
sequences and $n_0$ large enough for the $\mathcal{O}(1/n^2)$ term to be smaller than $c_1/n$. It remains to prove that there are at most $2N$
different coefficients $c_0^{(s)}$. Since the first term of the $n$ asymptotics of the secular equation uniquely determines the coefficients 
$c_0^{(s)}$, one can
use the pseudo-orbit expansion with the first term of $n$ asymptotics of vertex scattering matrices. Each pseudo orbit $\bar{\gamma}$ of length
$l_{\bar{\gamma}}$ corresponds to a term by $y^{l_{\bar{\gamma}}}$. Since the directed graph corresponding to $\Gamma$ has $2N$ edges, we obtain a
polynomial equation in $y = \mathrm{e}^{c_0^{(s)}}$ of $2N$-th order, which has $2N$ roots uniquely determining the values of $c_0^{(s)}$.
\end{proof}

We shall now consider some particular cases for which it is possible to be more specific regarding the possible number of high-frequency
abscissas. In the first example, we show that for bipartite graphs with $N$ edges there are at most $N$ spectral abscissas.
\begin{lemma}\label{thm-nseq}
Let $\Gamma$ be an equilateral graph with $N$ edges of unit length, (general) Robin coupling at the boundary and standard coupling otherwise.
Let us suppose that the graph is bipartite. Then for any damping functions bounded
and $\mathcal{C}^1$ on each edge there are at most $N$ high-frequency abscissas.
\end{lemma}
\begin{proof}
With the construction of the previous section in mind, one can see that all closed orbits have an even number of edges. Hence in the first term of the
$n$ asymptotics of the secular equation there are only terms with $\mathrm{e}^{2 c_0^{(s)}}$. The corresponding polynomial equation in
$\mathrm{e}^{2c_0^{(s)}}$
has only $N$ roots which give at most $N$ different values of $\omega_0 =\mathrm{Re\,}c_0^{(s)}$. 
\end{proof}
To obtain a lower bound on the number of high-frequency abscissas we shall first prove the following technical lemma.
\begin{lemma}\label{lem-combinat}
Let $\Gamma$ be a tree graph with standard coupling and all edges of length one, $\Gamma_2$ the corresponding oriented graph. Let $\{e_1,\dots,
e_{2N}\}$ be a set of edges on $\Gamma$, $\bar \gamma$ a pseudo orbit on $\{ e_1,\dots, e_{2N}, \hat{e_1},\dots, \hat{e_{2N}}\}\subset \Gamma_2$ and
$\mathcal{X}$ be a vertex of $\Gamma$ of degree $d$ and let there be $v$ edges emanating from $\mathcal{X}$ denoted by $\{e_1,\dots, e_v\}$. Let $s_1
= \frac{2}{d} - 1$, $s_2 = \frac{2}{d}$ be on-diagonal and off-diagonal elements of the scattering matrix at $\mathcal{X}$, respectively. For a
particular pseudo orbit $\bar\gamma' $ let $\Gamma_3(\bar\gamma')$ be a collection of all pseudo orbits which can be obtained from $\bar\gamma'$ by
all possible changes at $\mathcal{X}$. Then the coefficient in $\sum_{\bar\gamma \in\Gamma_3(\bar\gamma')} A_{\bar\gamma}(\lambda)$ corresponding to
the vertex $\mathcal{X}$ is $A_{\mathcal{X}} = s_1^{v} (s-1)^{v-1} [(v-1)s+1]$ with $s = \frac{s_2}{s_1} = \frac{2}{2-d}$. 
\end{lemma}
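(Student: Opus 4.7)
My plan is to parameterize the pseudo orbits in $\Gamma_3(\bar\gamma')$ by the permutations realizing the matching at $\mathcal{X}$, reduce the sum over this family to the determinant of a simple $v\times v$ matrix, and then evaluate that determinant explicitly.

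Concretely, a pseudo orbit on $\Gamma_2$ is specified by a bijection, at each vertex, between incoming and outgoing directed edges; for $\bar\gamma\in\Gamma_3(\bar\gamma')$ all such bijections are fixed except at $\mathcal{X}$, where the matching ranges over the symmetric group $S_v$ of permutations of the $v$ edges emanating from $\mathcal{X}$. For a given $\pi\in S_v$ the scattering factor at $\mathcal{X}$ is $\prod_{i=1}^{v}M_{i,\pi(i)}$ with $M=(s_1-s_2)I+s_2J$ the $v\times v$ scattering submatrix (here $J$ is the all-ones matrix). Pulling out the $\pi$-independent contributions from every other vertex, the coefficient attached to $\mathcal{X}$ becomes the weighted sum $\sum_{\pi\in S_v}(-1)^{c(\pi)}\prod_{i}M_{i,\pi(i)}$, where $c(\pi)$ is the cycle count of $\pi$ and the sign encodes the $\pi$-dependence of $(-1)^{m_{\bar\gamma}}$ inherited from the pseudo orbit expansion of the secular equation.

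The combinatorial heart of the argument is the identity $m_{\bar\gamma}=c(\pi)+\nu$ with $\nu$ independent of $\pi$. Here I would use crucially that $\Gamma$ is a tree: the subtrees rooted at the branches of $\mathcal{X}$ are pairwise disjoint once $\mathcal{X}$ is removed, so any walk leaving $\mathcal{X}$ along branch $i$ can only re-enter via the same branch $i$. Consequently the "return map" that the fixed outside matchings induce on the $v$ branches at $\mathcal{X}$ is the identity, the cycles of the pseudo orbit which pass through $\mathcal{X}$ are in bijection with the cycles of $\pi$, and the remaining cycles that avoid $\mathcal{X}$ contribute the constant $\nu$. Combined with the standard identity $\mathrm{sgn}(\pi)=(-1)^{v-c(\pi)}$ this rewrites the weighted sum as $(-1)^{v+\nu}\det M$.

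To finish, I would diagonalize $M$: its eigenvalues are $s_1+(v-1)s_2$ (simple, with eigenvector $(1,\dots,1)^{\mathrm{T}}$) and $s_1-s_2$ (multiplicity $v-1$), whence $\det M=[s_1+(v-1)s_2](s_1-s_2)^{v-1}$. Factoring $s_1$ out of each bracket and setting $s=s_2/s_1$ converts this into $s_1^v[(v-1)s+1](1-s)^{v-1}=(-1)^{v-1}s_1^v(s-1)^{v-1}[(v-1)s+1]$, and absorbing the resulting $\pi$-independent sign $(-1)^{v+\nu}\cdot(-1)^{v-1}=(-1)^{\nu-1}$ into the outside-vertex contribution yields the stated formula $A_{\mathcal{X}}=s_1^v(s-1)^{v-1}[(v-1)s+1]$. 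The main obstacle will be justifying the cycle-counting identity $m_{\bar\gamma}=c(\pi)+\nu$ in the partial setting where the selected edge set $\{e_1,\dots,e_{2N}\}$ does not cover every edge of $\Gamma$, so that some branches at $\mathcal{X}$ are only partially populated; once the "enter and leave a branch via the same edge" property is pinned down carefully in this generality, the remainder is standard linear algebra.
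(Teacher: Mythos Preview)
Your approach is correct and takes a genuinely different route from the paper. The paper proceeds in two combinatorial steps: first it proves by induction that the signed count of pseudo orbits on a $v$-star with \emph{no} reflections at the central vertex equals $g(v)=1-v$, via the recurrence $g(v+1)=vg(v)-vg(v-1)$; then it sums over the number $n$ of reflections, obtaining $A_{\mathcal{X}}=\sum_{n=0}^{v}\binom{v}{n}(-s_1)^{n}s_2^{\,v-n}(1-v+n)$ and simplifying this binomial sum by hand. Your argument bypasses both steps by recognising that the signed sum over matchings $\pi\in S_v$ is, up to the global sign $(-1)^{v+\nu}$, nothing but the Leibniz expansion of $\det M$ with $M=(s_1-s_2)I+s_2J$; the closed form then drops out of the eigenvalue factorisation $\det M=[s_1+(v-1)s_2](s_1-s_2)^{v-1}$.

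What each approach buys: the paper's method is entirely elementary and makes the role of the no-reflection orbits explicit, but the induction and the binomial manipulation are somewhat ad hoc. Your determinant reduction is shorter and conceptually cleaner, and it explains \emph{why} the formula factors as it does (the $(s-1)^{v-1}$ comes from the $(v-1)$-fold eigenvalue $s_1-s_2$). The price you pay is having to justify carefully the identity $m_{\bar\gamma}=c(\pi)+\nu$; your sketch---that on a tree each excursion from $\mathcal{X}$ along branch $i$ must return along $\hat e_i$, so the return map is the identity and the periodic orbits through $\mathcal{X}$ are exactly the cycles of $\pi$---is the right one and goes through without difficulty. Note also that the paper's own computation actually ends with $-s_1^{v}(s-1)^{v-1}[(v-1)s+1]$, so there is a harmless overall sign ambiguity in the statement; since the lemma is only used (in Lemma~\ref{thm-below}) to detect when $A_{\mathcal{X}}$ vanishes, your residual $(-1)^{\nu-1}$ is equally irrelevant.
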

\begin{proof}
First, we prove by induction that if one sums up the contribution of all paths with no reflection coefficients at $\mathcal{X}$ one obtains
$(1-v)s_2^{v}$. Let us assume a star graph of $v$ edges and its corresponding directed counterpart. Let us denote by $g(v)$ the
$(-1)^{m_{\bar\gamma}}$ multiple of the number of different pseudo orbits $\bar\gamma$ on it which do not have any reflection at the central vertex
and let us assume $g(v) = 1-v$. This clearly holds true for $v = 2, 3$, since $g(2) = -1$ (pseudo orbit $e_1 \hat{e_2} e_2 \hat{e_1}$) and $g(3) = -2$
(pseudo orbits $e_1 \hat{e_2} e_2 \hat{e_3} e_3 \hat{e_1}$ and $e_1 \hat{e_3} e_3 \hat{e_2} e_2 \hat{e_1}$). Furthermore, we prove that $g(v+1) = v
g(v) - v g(v-1)$. One considers the set of all pseudo orbits without reflection on $v$ edges. The first term corresponds to adding the $(v+1)$-th edge
to one of the pseudo orbits (for each pseudo orbit on $r$ edges one has $r$ possibilities and hence $g(v)$ is multiplied by $v$). The second term
corresponds 
to a pseudo orbit on two edges -- the $(v+1)$-th one and one of the others and any possible combination of nonreflection pseudo orbits on the other
edges, \emph{i.e.} $g(v-1)$. By induction we have $g(v+1) = v (1-v) - v (1-v+1) = -v$.

As the second step we use this result to prove the lemma. Then the coefficient obtained from the pseudo-orbit expansion by $\mathcal{X}$ is given by 
\begin{multline*}
A_{\mathcal{X}} = \sum_{n=0}^v {v \choose n} (-s_1)^n s_2^{v-n} (1- v + n) = (-s_1)^v  \sum_{n = 0}^v (1-n) {v \choose n} (-s)^n = 
\\
=-(-s_1)^v \sum_{n = 0}^v \left[{v-1 \choose n-1} (v-1) - {v-1 \choose n} \right] (-s)^n =
\\
= s_1^v \sum_{n=0}^{v-1} {v-1 \choose n} s^n (-1)^{v-n} [(v-1)s+1] = - s_1^v (s-1)^{v-1} [(v-1)s+1]\,,
\end{multline*}
where in the middle term the expressions ${v-1\choose -1}$ and ${v-1\choose v}$ are taken to be zero.
\end{proof}

We may now show that under certain conditions it is always possible to have at least a number of high-frequency abscissas
which is equal to the number of edges in the graph.
\begin{lemma}\label{thm-below}
Let $\Gamma$ be a tree graph with $N$ edges all with unit length, Robin coupling at the boundary and standard coupling otherwise. Let us suppose
that all vertices have odd degree. Then there always exists such a damping for which the number of high-frequency abscissas is greater than or equal
to $N$. 
\end{lemma}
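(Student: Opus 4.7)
The plan is to reduce the problem to a polynomial in $z = e^{-2c_0}$ of degree $N$ whose coefficients we can track explicitly, then exhibit a one-parameter family of dampings that forces the Newton polygon of this polynomial to have $N$ distinct slopes. First, by Theorem~\ref{thm-onlyaverage} I would replace $a_j(x)$ by its edge-average $\bar a_j$ on each edge. Since Robin coupling with a finite parameter has no $-1$ in the spectrum of its coupling matrix, Theorem~\ref{thm-damprate}(i) allows us to replace it by Neumann without changing the high frequency abscissas; the scattering matrix at each leaf then becomes $+1 + \mathcal O(1/n)$, and at every interior (standard-coupling) vertex of degree $d$ it is $\tfrac{2}{d}J - I + \mathcal O(1/n)$.

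Writing $\lambda_n = 2\pi i n + c_0 + \mathcal O(1/n)$, each directed edge contributes the factor $e^{-\tilde\lambda_e} = y\,e^{-\bar a_e}$ in the leading term, with $y = e^{-c_0}$. Because $\Gamma$ is a tree, the directed edges of any pseudo orbit pair up, so $\bar\gamma$ has even length, and the leading term of the secular equation is a polynomial of degree $N$ in $z = y^2$:
\[
P(z) = \sum_{k=0}^{N} C_k\, z^k, \qquad C_k = \sum_{|S|=k}\omega_S\prod_{j\in S}x_j,\qquad x_j := e^{-2\bar a_j},
\]
where $\omega_S$ is the signed sum over pseudo orbits whose edge-support is exactly $S$. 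On a tree, the pairings of incoming and outgoing directed edges at distinct vertices are independent once $S$ is fixed, so an iterated application of Lemma~\ref{lem-combinat} yields a factorisation $\omega_S = \prod_{\mathcal X} \tilde A_{\mathcal X}\bigl(v_{\mathcal X}(S)\bigr)$ in which each factor is, up to a universal sign, proportional to $d_{\mathcal X} - 2v_{\mathcal X}(S)$. Since every vertex degree is odd, $d_{\mathcal X} - 2v$ is always odd and hence nonzero, so $\omega_S \neq 0$ for every subset $S$.

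With every monomial coefficient nonzero, I would choose distinct positive numbers $\alpha_1 < \cdots < \alpha_N$, relabel the edges so that $\bar a_j = \alpha_j a$, and let the scale $a\to\infty$. For each $k$, the set $S_k = \{1,\dots,k\}$ is the unique $k$-subset of edges minimising $\sum_{i\in S}\alpha_i$, so
\[
C_k = \omega_{S_k}\exp\!\Bigl(-2a\sum_{i=1}^k\alpha_i\Bigr)\bigl(1+o(1)\bigr).
\]
The sequence $\log|C_k|$ is therefore strictly concave with successive slopes $-2a\alpha_1,\dots,-2a\alpha_N$, so the Newton polygon of $P$ has $N$ distinct edges, producing $N$ roots of pairwise distinct moduli $|z_k| = e^{2a\alpha_k}(1+o(1))$. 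The corresponding real parts $\operatorname{Re} c_0^{(k)} = -\tfrac12\log|z_k| = -a\alpha_k + o(1)$ are pairwise distinct, giving at least $N$ high frequency abscissas, and they automatically lie inside the interval predicted by Corollary~\ref{corbound}.

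The main obstacle is the nonvanishing step: extracting from Lemma~\ref{lem-combinat} the factorisation of $\omega_S$ as a product of per-vertex contributions, tracking the sign conventions at leaves of degree one as well as at interior vertices of degree $\geq 3$, and observing that the odd-degree hypothesis is precisely what makes each factor nonzero (so that no accidental cancellation collapses the Newton polygon). Once this algebraic point is secured, both the explicit choice of damping and the Newton polygon conclusion are essentially routine.
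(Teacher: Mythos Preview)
Your proposal is correct and follows essentially the same route as the paper's proof: both reduce to constant dampings via Theorem~\ref{thm-onlyaverage}, write the leading term of the secular equation as a degree-$N$ polynomial in $e^{\pm 2c_0}$ using the pseudo orbit expansion, invoke Lemma~\ref{lem-combinat} together with the odd-degree hypothesis to ensure every needed coefficient is nonzero (indeed $A_{\mathcal X}=(2v-d)/d$), and then separate the damping scales so that adjacent coefficients dominate pairwise. Your Newton polygon phrasing and the explicit Robin~$\to$~Neumann reduction via Theorem~\ref{thm-damprate}(i) are cosmetic refinements of what the paper does informally with the condition $0\ll a_N\ll\cdots\ll a_1$ and the remark that ``none of them is trivial.''
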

\begin{proof} 
Let us assume a given set of the edges of the graph $\Gamma$ and all pseudo orbits which go through every edge of this set twice. Then the
contribution of all these pseudo orbits cancels (see \emph{e.g} \cite{Ak1}) if and only if there is a vertex of $\Gamma$ of degree $d= 2v$ and the
above set of edges contains $v$ edges emanating from this vertex. This follows from the previous lemma, since $d = 2 - \frac{2}{s} = 2+ 2(v-1) = 2v$
for $s = -\frac{1}{v-1}$. 

We will prove the lemma by explicitly constructing the damping function for which this maximum of the number of sequences is attained. The idea is to
choose the damping in a way that its average on each edge differs significantly. In the virtue of theorem \ref{thm-onlyaverage} we choose constant
damping on each edge, \emph{i.e.} $0 \ll a_N \ll a_{N-1} \ll \dots \ll a_1$. The first term of the $n$ expansion of the secular equation can be
written as (for simplicity we omit $n$ to the corresponding power). 
\begin{multline*}
  C_N \mathrm{e}^{2 a_1+ 2 a_2+ \dots + 2a_N} y^{N} + C_{N-1} \mathrm{e}^{2 a_1+ 2 a_2+ \dots + 2 a_{N-1}}\left[1+\mathcal{O}\left(\mathrm{e}^{-2(
a_{N-1}-a_{N})}\right)\right] y^{N-1} +
  \\+ \dots + C_{2}\mathrm{e}^{2 a_1+ 2 a_2}\left[1+ \mathcal{O}\left(\mathrm{e}^{-2(a_{2}-a_{3})}\right)\right] y^{2} + C_{1}\mathrm{e}^{2 a_1}\left[
1+\mathcal{O}\left(\mathrm{e}^{-2(a_{1}-a_{2})}\right)\right] y + C_0 = 0\,,
\end{multline*}
where $y = \mathrm{e}^{2c_0^{(s)}}$ and the coefficients $C_j$ are given by the orbit expansion. Since there are no vertices of degree two and standard
conditions are considered, none of them is trivial. For $y$ being close to $\mathrm{e}^{-2a_1}$ the last two terms are dominant, for $y$ close to
$\mathrm{e}^{-2a_2}$ the last-but-one and last-but-two terms, etc. Hence for the roots of the previous polynomial equation of the $N$-th order we get
\begin{equation}
  y_j =  -\frac{C_{j-1}}{C_j}\,\mathrm{e}^{-2a_j}\left[ 1+\mathcal{O}\left(\mathrm{e}^{-2(a_{j}-a_{j+1})}\right)\right]\,,\label{eq-difdampsol}
\end{equation}
which gives 
$$
  c_0^{(j)} =  - a_j + \frac{1}{2}\ln{\left(-\frac{C_{j-1}}{C_j}\right)}+ \ln{\left[1+\mathcal{O}\left(\mathrm{e}^{-2(a_{j}-a_{j+1})}\right)\right]}
\,.
$$
The above result can be generalized for all equilateral graphs without vertices of degree two. The proof is very similar. Let us choose the damping
coefficients again as $0 \ll a_N \ll a_{N-1} \ll \dots \ll a_1$. Then the difference of the term of the pseudo-orbit expansion of the secular equation
 corresponding to $2j+ 1$ edges and the term corresponding to $2j$ or $2j+2$ edges is $\mathcal{O}(\mathrm{e}^{-a_j+a_{j+1}+a_{j+2}})$, since the term
with odd number of edges must contain at least three edges only once. Hence the secular equation for a graph with cycles can be viewed as a small
perturbation of a tree graph for this choice of the damping. Each of the roots~(\ref{eq-difdampsol}) gives rise to two roots (in general not with
distinct real parts) 
$$
  z_{j1,2} =  \sqrt{-\frac{C_{j-1}}{C_j}}\,\mathrm{e}^{-a_j}\left[
1+\mathcal{O}\left(\mathrm{e}^{-2(a_{j}-a_{j+1})}\right)+\mathcal{O}\left(\mathrm{e}^{-(a_{j}-a_{j+1}-a_{j+2})}\right)\right]\,,
$$
(with $z_j = \mathrm{e}^{c_0^{(j)}}$) which are still far away from the other roots. 
\end{proof}

\begin{proof}[Proof of Theorem~\ref{thm-main}]\ \\
\begin{enumerate}
\item[i)] The claim follows from Lemma~\ref{thm-seq1}. The number $m_{i}$ is an integer since the difference 
between the imaginary parts of two consecutive eigenvalues in
each sequence $\lambda_{sn}$ is asymptotically $2\pi$. Hence each $c_0^{(s)}$ corresponding to one root of polynomial equation of the $2N$-th order
gives a sequence of eigenvalues with the counting function $N_j(R) = R/2\pi+\mathcal{O}(1)$ and therefore
\[
  \lim_{R\to\infty}\mu_R((\mathrm{Re\,}c_0^{(s)}-\varepsilon,\mathrm{Re\,}c_0^{(s)}+\varepsilon)) = \lim_{R\to\infty} \frac{\frac{m_{i}
R}{2\pi}+\mathcal{O}(1)}{\frac{NR}{\pi}+\mathcal{O}(1)} = \frac{m_{i}}{2N}\, ,
\]
proving i).   
\item[ii)] The claim follows from Lemma~\ref{thm-nseq}.
\item[iii)] The claim follows from Lemma~\ref{thm-below}.
\end{enumerate}
\end{proof}

\section{Examples}\label{sec-examples}
Now, we present several simple examples to illustrate the asymptotic behaviour of high-frequency eigenvalues.
\begin{example}\label{ex-twocycles}{\rm \textbf{Two cycles with different damping coefficients}\\ 
Let us study an example of a graph with six edges of length one consisting of two cycles joined at one vertex and thus 
not having any boundary vertices (see figure \ref{fig1}). Direct consideration of these six edges would yield a $12\times 12$
coupling matrix. Since in the vertices connecting only two edges both the functional value and the derivative
are continuous, these vertices may be deleted to obtain a graph with two cycles joined at the central vertex. We thus obtain
the $4\times 4$ coupling matrix
$$
   U = \frac{1}{2}\begin{pmatrix}
    -1 & \phantom{-}1 & \phantom{-}1 & \phantom{-}1\\
    \phantom{-}1 & -1 & \phantom{-}1 & \phantom{-}1\\
    \phantom{-}1 & \phantom{-}1 & -1 & \phantom{-}1\\
    \phantom{-}1 & \phantom{-}1 & \phantom{-}1 & -1
   \end{pmatrix}\,.
$$
This corresponds to a coupling matrix at the middle vertex for standard coupling connecting four edges.

Let the damping coefficients $a_1$ and $a_2$ be different on each cycle and let us assume standard coupling at each
of the vertices.
We use the Ansatz $f_j(x) = \alpha_j \sinh{(\tilde \lambda_j(\lambda) x)} + \beta_j \cosh{(\tilde \lambda_j(\lambda) x)}$ on both cycles of the graph
with $x=0$ at their centers. Then from continuity conditions in the central vertex one has
\begin{eqnarray*}
  \alpha_j \sinh{\left(\frac{3}{2}\tilde \lambda_j(\lambda)\right)} + \beta_j \cosh{\left(\frac{3}{2}\tilde \lambda_j(\lambda)\right)} =  - \alpha_j
\sinh{\left(\frac{3}{2}\tilde \lambda_j(\lambda)\right)} + \beta_j \cosh{\left(\frac{3}{2}\tilde \lambda_j(\lambda)\right)}
\end{eqnarray*}
and hence either $\alpha_1 = \alpha_2 = 0$, or $\sinh{\left(\frac{3}{2}\tilde \lambda_1(\lambda)\right)} = 0$ or $\sinh{\left(\frac{3}{2}\tilde
\lambda_2(\lambda)\right)} = 0$. Under the first assumption ($\alpha_1 = \alpha_2 = 0$) one has for standard conditions
\begin{eqnarray*}
  \beta_1 \cosh{\frac{3 \tilde \lambda_1(\lambda)}{2}} = \beta_2 \cosh{\frac{3 \tilde \lambda_2(\lambda)}{2}}\,,\\
  \beta_1  \tilde \lambda_1(\lambda) \sinh{\frac{3\tilde \lambda_1(\lambda)}{2}} + \beta_2 \tilde \lambda_2 (\lambda)
\sinh{\frac{3\tilde\lambda_2(\lambda)}{2}} = 0\,.
\end{eqnarray*}
with
$$
  \tilde \lambda_j \equiv \tilde \lambda_j(\lambda) = \sqrt{\lambda^2 + 2 a_j \lambda - b_j}\,.
$$
Hence 
$$
  \begin{pmatrix}
   \cosh{\frac{3 \tilde \lambda_1}{2}}&-\cosh{\frac{3 \tilde \lambda_2}{2}} \\
   \tilde \lambda_1 \sinh{\frac{3 \tilde \lambda_1}{2}}& \tilde \lambda_2\sinh{\frac{3 \tilde \lambda_2}{2}} \\
  \end{pmatrix}
  \begin{pmatrix}
    \beta_1 \\ \beta_2
  \end{pmatrix} = 0
$$
and
$$
  \tilde \lambda_2 \sinh{\frac{3\tilde\lambda_2}{2}}\cosh{\frac{3\tilde\lambda_1}{2}} + \tilde \lambda_1
\sinh{\frac{3\tilde\lambda_1}{2}}\cosh{\frac{3\tilde\lambda_2}{2}} = 0 \,.
$$
which can be written as
$$
  (\tilde \lambda_1 + \tilde \lambda_2) \sinh{\frac{3(\tilde \lambda_1 + \tilde \lambda_2)}{2}} + (\tilde \lambda_1 - \tilde \lambda_2)
\sinh{\frac{3(\tilde \lambda_1 - \tilde \lambda_2)}{2}} = 0\,.
$$
Using the asymptotic expansion (\ref{eq-asymptot}) one obtains 
$$
  4 \pi i n \left[\mathrm{e}^{6\pi i  n+ \frac{3}{2}(a_1+a_2+ 2c_0^{(s)})} - \mathrm{e}^{-6 \pi i n -\frac{3}{2}(a_1+a_2+ 2c_0^{(s)})} \right]
  +\mathcal{O}(1) = 0\,. 
$$
Therefore, from the leading term of the asymptotics one obtains a polynomial equation
$$
  \mathrm{e}^{3(a_1+a_2)}z^6-1 = 0
$$
in $z = \mathrm{e}^{c_{0}^{(s)}}$, from which one can find the coefficients $c_0^{(s)}$ and hence the high-frequency abscissas. We have
$$
  z= \mathrm{exp\,}\left(-\frac{a_1+a_2}{2}+\frac{\pi \mathrm{i}s}{3}\right)
$$
and hence 
\begin{equation}
  c_0^{(s)} = -\frac{a_1+ a_2}{2}+ \frac{\pi i s}{3}\,,\quad s \in \{0,\dots, 5\} \label{eq-c01}
\end{equation}
On the other hand, the equation $\sinh{\left(\frac{3}{2} \tilde \lambda_j (\lambda_{sn})\right)} = 0$ gives, in the leading term 
of the asymptotics, the equation
\begin{equation}
  \mathrm{e}^{3a_j} z^{3} -1 = 0 \label{eq-ex1-2} 
\end{equation}
with $z = \mathrm{e}^{c_0^{(s)}}$. Hence
$$
    c_{0}^{s} = - a_j +\frac{2\pi i s}{3}\,,\quad s \in
\{0, 1, 2\}
$$
The location of the eigenvalues for a particular choice of the damping coefficients $a_1$ and $a_2$ is shown in figure \ref{ex71_1}, there are in
general three sequences of eigenvalues with real parts given by the previous relation and (\ref{eq-c01}): $-a_1$, $-a_2$ and $-\frac{a_1+a_2}{2}$. The
eigenfunctions corresponding to the first two of them are supported at each of the cycles, the third one is supported on both of them and is symmetric
on each of the cycles. Since one can consider only two edges of length three, the upper bound on the number of distinct high-frequency abscissas given 
by Lemma~\ref{thm-seq1} is four.
\begin{figure}[h]
   \begin{center}
   \includegraphics{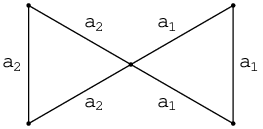}
     \caption{Graph with two cycles (Example \ref{ex-twocycles})}
     \label{fig1} 
   \end{center}
\end{figure} 
\begin{figure}[h]
   \begin{center}
   \includegraphics[width=\textwidth]{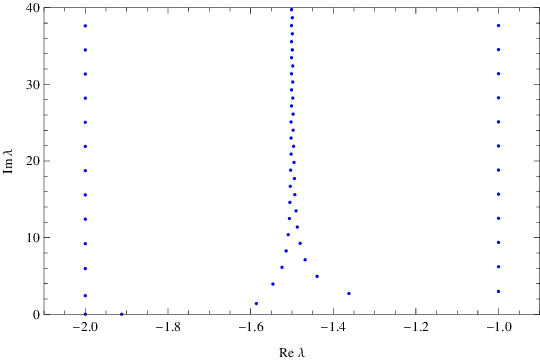}
   \caption{Spectrum of the graph in figure \ref{fig1} for $a_1 = 2$, $a_2 = 1$, $b_1 = 0$, $b_2 = 0$ in Example \ref{ex-twocycles}}\label{ex71_1}
   \end{center}
\end{figure} 
}\end{example}

\begin{example}\label{ex-loop1ap}{\rm \textbf{One cycle with one appended edge}\\
The graph in the second example consists of a cycle of length three and one edge of length one attached to this cycle (see figure \ref{fig2}).
We assume Dirichlet conditions at the boundary vertex and standard coupling at the central one.
As in the previous example, it would also be possible to replace the three edges in the cycle by one single loop, but in this
case we did not feel the need to do so as the matrix corresponding to the original graph had a smaller dimension. The coupling matrix is 
$$
  U = \begin{pmatrix}
  -1 & \phantom{-}0 & \phantom{-}0 & \phantom{-}0 & \phantom{-}0 & \phantom{-}0 & \phantom{-}0 & \phantom{-}0 \\
  \phantom{-}0 & -1/3 & \phantom{-}2/3 & \phantom{-}0 & \phantom{-}2/3 & \phantom{-}0 & \phantom{-}0 & \phantom{-}0\\
  \phantom{-}0 & \phantom{-}2/3 & -1/3 & \phantom{-}0 & \phantom{-}2/3 & \phantom{-}0 & \phantom{-}0 & \phantom{-}0\\
  \phantom{-}0 & \phantom{-}0 & \phantom{-}0 & \phantom{-}0 & \phantom{-}0 & \phantom{-}0 & \phantom{-}1 & \phantom{-}0\\
  \phantom{-}0 & \phantom{-}2/3 & \phantom{-}2/3 & \phantom{-}0 & -1/3 & \phantom{-}0 & \phantom{-}0 & \phantom{-}0\\
  \phantom{-}0 & \phantom{-}0 & \phantom{-}0 & \phantom{-}0 & \phantom{-}0 & \phantom{-}0 & \phantom{-}0 & \phantom{-}1\\
  \phantom{-}0 & \phantom{-}0 & \phantom{-}0 & \phantom{-}1 & \phantom{-}0 & \phantom{-}0 & \phantom{-}0 & \phantom{-}0\\
  \phantom{-}0 & \phantom{-}0 & \phantom{-}0 & \phantom{-}0 & \phantom{-}0 & \phantom{-}1 & \phantom{-}0 & \phantom{-}0
  \end{pmatrix}\,.
$$

Let us assume constant dampings $a_2$ on the cycle and $a_{1}$ on the
attached edge with, in general, $a_1 \ne a_2$. 
\begin{figure}[h]
   \begin{center}
   \includegraphics{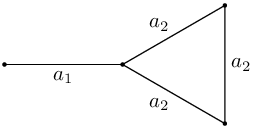}
     \caption{Graph with a cycle and one appended edge (Example \ref{ex-loop1ap})}\label{fig2}
   \end{center}
\end{figure}
\begin{figure}[h]
   \begin{center}
   \includegraphics[width=\textwidth]{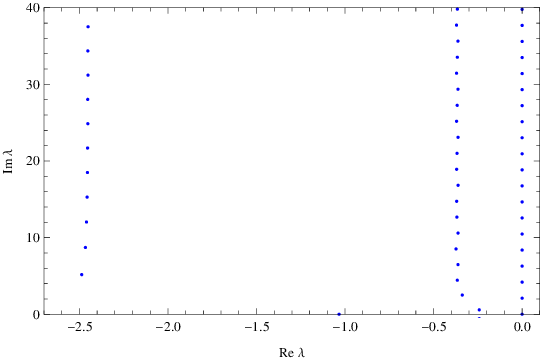}
   \caption{Spectrum of the graph in figure \ref{fig2} for $a_1 = 3$, $a_2 = 0$, $b_1 = 0$, $b_2 = 0$ in Example \ref{ex-loop1ap}}\label{ex72_1}
   \end{center}
\end{figure}
\begin{figure}[h]
   \begin{center}
   \includegraphics[width=\textwidth]{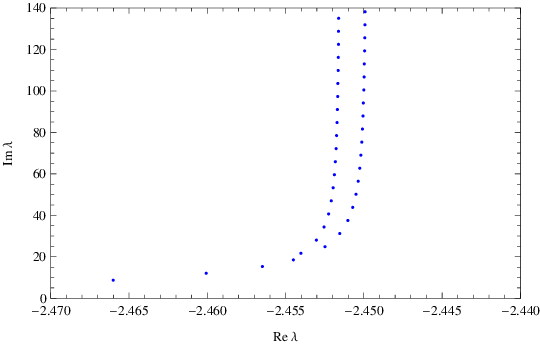}
   \caption{Spectrum of the graph in figure \ref{fig2} for $a_1 = 3$, $a_2 = 0$, $b_1 = 0$, $b_2 = 0$ in Example \ref{ex-loop1ap}, detail}\label{ex72_2}
   \end{center}
\end{figure} 
\begin{figure}[h]
   \begin{center}
   \includegraphics[width=\textwidth]{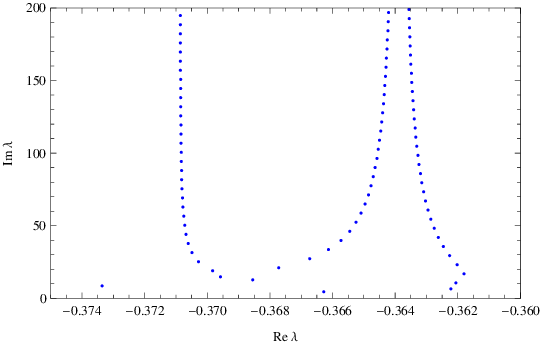}
   \caption{Spectrum of the graph in figure \ref{fig2} for $a_1 = 3$, $a_2 = 0$, $b_1 = 0$, $b_2 = 0$ in Example \ref{ex-loop1ap}, detail}\label{ex72_3}
   \end{center}
\end{figure} 

Using the Ansatz $f_1(x) = \alpha_1 \sinh{(\tilde \lambda_1(\lambda) x)}$ on the appendix and $f_2(x) =\alpha_2 \sinh{(\tilde \lambda_2(\lambda)
x)}+\beta_2 \cosh{(\tilde \lambda_2(\lambda) x)}$ on the cycle with $x = 0$ at its center, one has from the continuity of the function on the cycle
$$
  f_2(3/2) = f_2(-3/2)\quad \Rightarrow \quad \alpha_2 \sin{\left(\frac{3}{2}\tilde \lambda_2(\lambda)\right)} = 0\,.
$$
In a similar way to the previous example, this leads to equation (\ref{eq-ex1-2}) and to the same behaviour 
as in the case of a segment of length $3/2$, \emph{i.e.}
$c_0^{(6)} = -a_2$,
$c_0^{(7)} = -a_2+ \frac{2}{3}\pi i$ and $c_0^{(8)} = -a_2+\frac{4}{3}\pi i$.

Under the assumption $\alpha_2 = 0$ one has
$$
  \tilde \lambda_1(\lambda) \cosh{\frac{3\tilde \lambda_2(\lambda)}{2}}\cosh{(\tilde \lambda_1(\lambda))} + 2 \tilde \lambda_2(\lambda)
\sinh{\frac{3\tilde \lambda_2(\lambda)}{2}} \sinh{(\tilde \lambda_1(\lambda))} = 0\,.
$$
Using the asymptotic expansions of $\tilde\lambda_j(\lambda)$ and $\lambda$ in $n$, in a way similar to the previous example, we have 
$$
  n \left(z^5 - \frac{1}{3}(z^3 \mathrm{e}^{-2 a_1} + z^2 \mathrm{e}^{-3 a_2}) + \mathrm{e}^{- 2 a_1 - 3 a_2} \right) + \mathcal{O}(1) = 0
$$
with $z = \mathrm{e}^{c_0^{(s)}}$, which gives the polynomial equation 
$$
  z^5 - \frac{1}{3}(z^3 \mathrm{e}^{-2 a_1} + z^2 \mathrm{e}^{-3 a_2}) + \mathrm{e}^{- 2 a_1 - 3 a_2}  = 0
$$
yielding the high-frequency abscissas. Hence there are in general five sequences of eigenvalues with eigenfunctions having nontrivial
support on the appendix
plus three sequences with eigenfunctions behaving as $\sinh{(\tilde \lambda_2(\lambda)x)}$ on the cycle and having trivial component on the 
segment.
For the particular choice $a_1 = 3$ and $a_2 = 0$ (\emph{i.e.} no damping on the cycle) we have these numerically found roots of the above equations:
$z_{1,2} = - 0.345 \pm 0.603 \,\mathrm{i}$, $z_{3,4} = \pm 0.0863$, $z_{5} = 0.690$. The values of the coefficients are $c_0^{(1,2)} = -0.364 \pm
2.091\,\mathrm{i}$, $c_0^{(3)} = -2.452 + \pi \mathrm{i}$, $c_0^{(4)} = -2.450$, $c_0^{(5)} = -0.371$. The eigenvalues may be computed numerically
in this case and their location is shown in figures~\ref{ex72_1} -- \ref{ex72_3}. The upper bound on the number of high-frequency abscissas given
by Lemma~\ref{thm-seq1} is eight.
}\end{example}

\begin{example}\label{ex-loop2ap} {\rm \textbf{One cycle with two appended edges} \\
The third example we study consists of five edges, three of them in a cycle, with the remaining two attached at different vertices as shown in
figure \ref{fig3}. Let us again assume standard coupling at vertices connecting two and more edges and Dirichlet coupling at the boundary. The
lengths of all edges are equal to one.
The coupling matrix for this graph is
$$
  U = \begin{pmatrix}
  -1 & \phantom{-}0 & \phantom{-}0 & \phantom{-}0 & \phantom{-}0 & \phantom{-}0 & \phantom{-}0 & 0 & \phantom{-}0 & 0\\
  \phantom{-}0 & -1/3 & \phantom{-}2/3 & \phantom{-}0 & \phantom{-}0 & \phantom{-}0 & \phantom{-}2/3 & 0 & \phantom{-}0 & 0\\
  \phantom{-}0 & \phantom{-}2/3 & -1/3 & \phantom{-}0 & \phantom{-}0 & \phantom{-}0 & \phantom{-}2/3 & 0 & \phantom{-}0 & 0\\
  \phantom{-}0 & \phantom{-}0 & \phantom{-}0 & -1/3 & \phantom{-}2/3 & \phantom{-}0 & \phantom{-}0 & 0 & \phantom{-}2/3 & 0\\
  \phantom{-}0 & \phantom{-}0 & \phantom{-}0 & \phantom{-}2/3 & -1/3 & \phantom{-}0 & \phantom{-}0 & 0 & \phantom{-}2/3 & 0\\
  \phantom{-}0 & \phantom{-}0 & \phantom{-}0 & \phantom{-}0 & \phantom{-}0 & -1 & \phantom{-}0 & 0 & \phantom{-}0 & 0\\
  \phantom{-}0 & \phantom{-}2/3 & \phantom{-}2/3 & \phantom{-}0 & \phantom{-}0 & \phantom{-}0 & -1/3 & 0 & \phantom{-}0 & 0\\
  \phantom{-}0 & \phantom{-}0 & \phantom{-}0 & \phantom{-}0 & \phantom{-}0 & \phantom{-}0 & \phantom{-}0 & 0 & \phantom{-}0 & 1\\
  \phantom{-}0 & \phantom{-}0 & \phantom{-}0 & \phantom{-}2/3 & \phantom{-}2/3 & \phantom{-}0 & \phantom{-}0 & 0 & -1/3 & 0\\
  \phantom{-}0 & \phantom{-}0 & \phantom{-}0 & \phantom{-}0 & \phantom{-}0 & \phantom{-}0 & \phantom{-}0 & 1 & \phantom{-}0 & 0 
  \end{pmatrix}
$$

We use the Ansatz 
$f_{1,2}(x) = \alpha_{1,2}\sinh{\tilde\lambda_{1,2}x}$ on the appendices, $f_{3}(x) = \alpha_3 \sinh{\tilde\lambda_3 x} +  \beta_3
\cosh{\tilde\lambda_3 x}$ on two edges of the cycle with $x = 0$ in the only vertex of degree two and $f_{4}(x) = \alpha_4 \sinh{\tilde\lambda_4 x} + 
\beta_4 \cosh{\tilde\lambda_4 x}$ on the last edge of the cycle with $x = 0$ in its center. For simplicity we omit explicit dependence of $\tilde
\lambda_{j}$ on $\lambda$. From the coupling conditions we get
\begin{eqnarray*}
  \alpha_1 \sinh{\tilde \lambda_1 x} = \alpha_4 \sinh{\frac{\tilde \lambda_3}{2}} + \beta_4 \cosh{\frac{\tilde \lambda_3}{2}} = \alpha_3 \sinh{\tilde
\lambda_3} + \beta_3 \cosh{\tilde \lambda_3}\,,\\
  \alpha_2 \sinh{\tilde \lambda_2 x} = -\alpha_4 \sinh{\frac{\tilde \lambda_3}{2}} + \beta_4 \cosh{\frac{\tilde \lambda_3}{2}} = -\alpha_3
\sinh{\tilde \lambda_3} + \beta_3 \cosh{\tilde \lambda_3}\,,\\  
  \tilde \lambda_1 \alpha_1 \cosh{\tilde \lambda_1} + \tilde \lambda_3 \left[\alpha_4 \cosh{\frac{\tilde \lambda_3}{2}} + \beta_4 \sinh{\frac{\tilde
\lambda_3}{2}}+ \alpha_3 \cosh{\tilde \lambda_3} + \beta_3 \sinh{\tilde \lambda_3}\right] = 0\,,\\
  \tilde \lambda_2 \alpha_2 \cosh{\tilde \lambda_2} + \tilde \lambda_3 \left[-\alpha_4 \cosh{\frac{\tilde \lambda_3}{2}} + \beta_4 \sinh{\frac{\tilde
\lambda_3}{2}} - \alpha_3 \cosh{\tilde \lambda_3} + \beta_3 \sinh{\tilde \lambda_3}\right] = 0\,.\\
\end{eqnarray*}
The determinant of the above system gives the secular equation in this case, and the location of the eigenvalues found numerically for
particular values of the damping coefficients are shown in figures \ref{ex73_1} -- \ref{ex73_2}.
\begin{figure}[h]
   \begin{center}
   \includegraphics{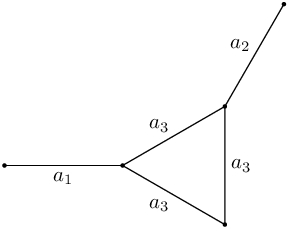}
     \caption{Graph with one cycle and two appended edges (Example \ref{ex-loop2ap})}\label{fig3}
   \end{center}
\end{figure} 

\begin{figure}[h]
   \begin{center}
   \includegraphics[width=\textwidth]{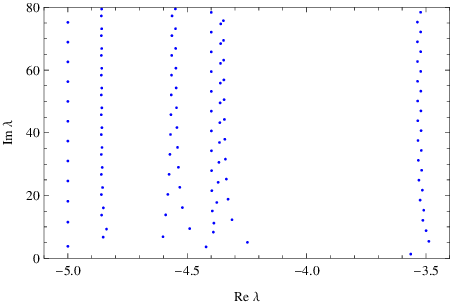}
     \caption{Spectrum of the graph in figure \ref{fig3} for $a_1 = 3$, $a_2 = 4$, $a_3 = 5$, $b_1 = b_2 = b_3 = 0$ in Example \ref{ex-loop2ap}}\label{ex73_1}
   \end{center}
\end{figure} 

\begin{figure}[h]
   \begin{center}
   \includegraphics[width=\textwidth]{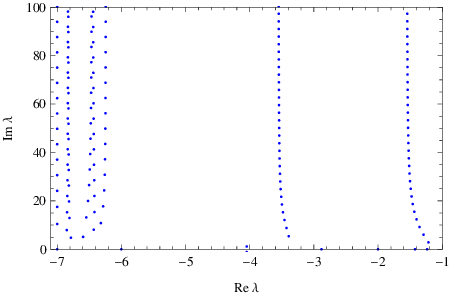}
     \caption{Spectrum of the graph in figure \ref{fig3} for $a_1 = 1$, $a_2 = 3$, $a_3 = 7$, $b_1 = b_2 = b_3 = 0$ in Example \ref{ex-loop2ap}}\label{ex73_2}
   \end{center}
\end{figure} 

Using the asymptotic expansion of $\tilde \lambda_j(\lambda)$ in $n$ one obtains a polynomial equation in $z = \mathrm{e}^{c_0^{(s)}}$.
\begin{multline*}
  -9 + (3\,\mathrm{e}^{2 a_1} + 3\,\mathrm{e}^{2 a_2}+ \mathrm{e}^{2 a_3})z^2+8\mathrm{e}^{3a_3}z^3+(\mathrm{e}^{4a_3}+\mathrm{e}^{2(a_1+a_3)}-\mathrm{e}^{2(a_1+a_2)}+
  \\
  +\mathrm{e}^{2(a_2+a_3)}) z^4 - 8(\mathrm{e}^{2a_1+3a_3}+\mathrm{e}^{2a_2+3a_3})z^5+(\mathrm{e}^{2(a_1+a_2+a_3)}-\mathrm{e}^{6a_3}+\mathrm{e}^{2a_1+4a_3}+
  \\
  +\mathrm{e}^{2a_2+4a_3})z^6+8 \mathrm{e}^{2a_1+2a_2+3a_3}z^7+(\mathrm{e}^{2a_1+2a_2+4a_3}+3\mathrm{e}^{2a_1+6a_3}+3\mathrm{e}^{2a_1+6a_3})z^8-
  \\
  -9\mathrm{e}^{2a_1+2a_2+6a_3}z^{10} = 0\,.
\end{multline*}
For the choice of the damping $a_1 = 3$, $a_2 = 4$ and $a_3 = 5$ it has roots
\begin{eqnarray*}
 z_1 = -0.02951, \quad z_2 = -0.01228, \quad z_3 =  -0.00471- 0.00938 \,\mathrm{i},\\
 z_4 =  -0.00471+ 0.00938\,\mathrm{i}, \quad z_5 = -0.00354 - 0.00690 \,\mathrm{i},\\ 
 z_6 = -0.00354 + 0.00690 \,\mathrm{i},\quad z_7 =  0.00674, \quad z_8 = 0.01122 - 0.00626\,\mathrm{i},\\
 z_9 = 0.01122 - 0.00626\,\mathrm{i},\quad  z_{10} = 0.02911.
\end{eqnarray*}
This leads to eigenvalues with
\begin{eqnarray*}
c_0^{(1)} = -3.52 + 3.14\,\mathrm{i}, &\quad c_0^{(2)} = -4.40 + 3.14 \,\mathrm{i}, &\quad c_0^{(3)} = -4.56 - 2.04\,\mathrm{i},\\
c_0^{(4)} = -4.56 + 2.04\,\mathrm{i}, &\quad c_0^{(5)} =  -4.86 - 2.05 \,\mathrm{i}, &\quad c_0^{(6)} = -4.86 + 2.05  \,\mathrm{i},\\
c_0^{(7)} = -5, &\quad c_0^{(8)} = -4.35 - 0.51\,\mathrm{i}, &\quad c_0^{(9)} = -4.35 + 0.51\,\mathrm{i}, \\
c_0^{(10)} = -3.54.
\end{eqnarray*}
The upper bound on the number of high-frequency abscissas according to Lemma~\ref{thm-seq1} is ten.
}\end{example}

We showed in Section~\ref{sec-number} that there can be at most $2N$ distinct high-frequency abscissas. We shall now present an example of a
graph with standard coupling where this maximum is attained.

\begin{example}\label{ex-complete}{\rm \textbf{Complete graph on four vertices $K_{4}$}\\
Let us consider a complete graph $K_{4}$ with all edge lengths equal
to one (see Figure~\ref{fig4}). The secular equation can be obtained by the approach shown in the previous sections
or in Sections~\ref{sec-seceq} and~\ref{sec-examples}.

\begin{figure}[h]
   \begin{center}
   \includegraphics[width=5cm]{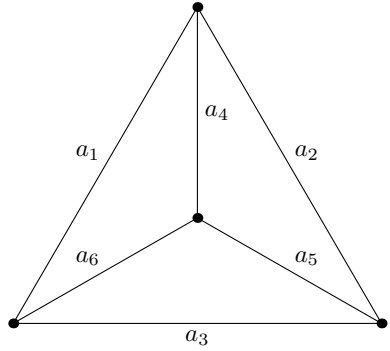}
     \caption{Complete graph on four vertices (Example~\ref{ex-complete})}\label{fig4}
   \end{center}
\end{figure} 

For a special choice of damping coefficients $a_1 = 7.7$, $a_2 = 10.5$, $a_3 = 13.7$, $a_4 = 13.7$, $a_5 = 5.7$, $a_6 = 11.2$, one finds that the
secular equation leads to the following polynomial in $z = \mathrm{e}^{c_0^{(s)}}$
\begin{multline*}
  -81 + 1.43471\cdot 10^{13} z^2 + 1.17346\cdot 10^{16} z^3 - 5.30758\cdot 10^{23} z^4 -3.11505 \cdot 10^{27} z^5 + 
  \\
  + 4.21374 \cdot 10^{33} z^6 - 8.24732 \cdot 10^{36} z^7 - 4.34523 \cdot 10^{42} z^8 + 1.63257 \cdot 10^{43} z^9 +
  \\ + 1.98616 \cdot 10^{50} z^{10} -  1.56782 \cdot 10^{56} z^{12} = 0\,,
\end{multline*}
which has (approximate) roots 
\begin{eqnarray*}
 z_1 = -0.00112, \quad z_2 = -0.000145, \quad z_3 =  -0.0000312,\quad z_4 = -9.53\cdot 10^{-6},\\
 \quad z_5 = -5.42\cdot 10^{-6}, \quad z_6 = -2.77 \cdot 10^{-6}, \quad z_7 =  2.78 \cdot 10^{-6}, \quad z_8 = 5.02\cdot 10^{-6},\\
  z_9 = 0.0000113, \quad z_{10} = 0.0000276, \quad z_{11} = 0.000147, \quad z_{12} = 0.00112.
\end{eqnarray*}
This leads to eigenvalues with 
\begin{eqnarray*}
c_0^{(1)} = -6.80 + 3.14 \,\mathrm{i}, &\quad c_0^{(2)} = -8.84 + 3.14 \,\mathrm{i}, &\quad c_0^{(3)} = -10.4 + 3.14 \,\mathrm{i},\\
c_0^{(4)} = -11.6 + 3.14 \,\mathrm{i}, &\quad c_0^{(5)} = -12.1 + 3.14 \,\mathrm{i}, &\quad c_0^{(6)} = -12.797 + 3.14 \,\mathrm{i},\\
c_0^{(7)} = -12.792, &\quad c_0^{(8)} = -12.2, &\quad c_0^{(9)} = -11.4, \\
c_0^{(10)} = -10.5, &\quad c_0^{(11)} = -8.83, &\quad c_0^{(12)} = -6.80.
\end{eqnarray*}
We can see that the maximal number of distinct high-frequency abscissas, which is 12, is attained.
}\end{example}

\begin{example}\label{ex-star}{\rm \textbf{Star graph with different edge lengths}\\
Now we illustrate what happens when the lengths of the edges of the graph are changed. We consider a star graph with three edges, Dirichlet coupling at
the free ends and standard coupling in the central vertex. The coupling matrix is
$$
  U = \begin{pmatrix}
  -1 & \phantom{-}0 & \phantom{-}0 & \phantom{-}0 & \phantom{-}0 & \phantom{-}0\\
  \phantom{-}0 & -1/3 & \phantom{-}0 & \phantom{-}2/3 & \phantom{-}0 & \phantom{-}2/3\\
  \phantom{-}0 & \phantom{-}0 & -1 & \phantom{-}0 & \phantom{-}0 & \phantom{-}0\\ 
  \phantom{-}0 & \phantom{-}2/3 & \phantom{-}0 & -1/3 & \phantom{-}0 & \phantom{-}2/3\\
  \phantom{-}0 & \phantom{-}0 & \phantom{-}0 & \phantom{-}0 & -1 & \phantom{-}0\\
  \phantom{-}0 & \phantom{-}2/3 & \phantom{-}0 & \phantom{-}2/3 & \phantom{-}0 & -1/3
  \end{pmatrix}
$$
If all the lengths are commensurate then the graph can be described by the machinery of
previous sections, where the unit length is the greatest common divisor of the edge lengths.

Using the Ansatz $f_j(x) = \alpha_j \sinh{\tilde \lambda_j x}$ on each edge with $x = 0$ at the free end one obtains the secular equation
$$
   \sum_{j=1}^{3} \tilde\lambda_j \cosh{\left(\tilde{\lambda_j} l_j\right)} \prod_{\stackrel{i = 1}{i \ne j}}^{3} \sinh{\left(\tilde{\lambda_i} l_i\right)} = 0\,. 
$$
The eigenvalue location for particular lengths of the edges and the choice $a_1 = 3$,  $a_2 = 4$,  $a_3 = 5$ are shown in figures
\ref{exl_1}--\ref{exl_3}. From this we see that by choosing different lengths for the edges it will be possible to increase the number of high-frequency
abscissas and make it larger than the value of $2N$ given by Lemmata~\ref{thm-seq1} and~\ref{thm-nseq} for the equilateral case.
\begin{figure}[h]
   \begin{center}
   \includegraphics[width=\textwidth]{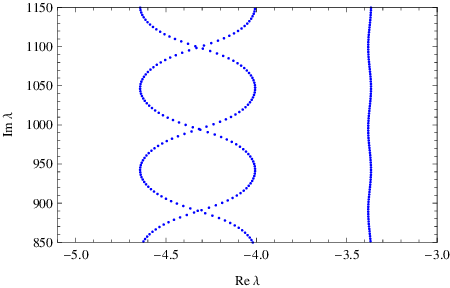}
     \caption{Spectrum of a star graph with different lengths of the edges, $l_1 = 1$, $l_2 = 1$, $l_3 = 1.03$ in Example~\ref{ex-star}}
     \label{exl_1}
   \end{center}
\end{figure}
\begin{figure}[h]
   \begin{center}
   \includegraphics[width=\textwidth]{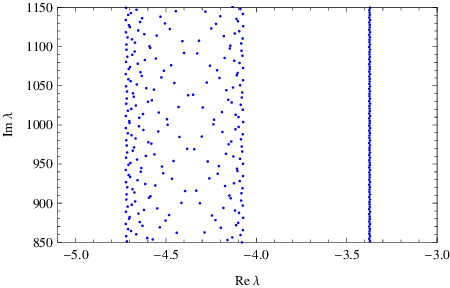}
     \caption{Spectrum of a star graph with different lengths of the edges, $l_1 = 1$, $l_2 = 1$, $l_3 = 1.41$ in Example~\ref{ex-star}}
     \label{exl_2}
   \end{center}
\end{figure}
\begin{figure}[h]
   \begin{center}
   \includegraphics[width=\textwidth]{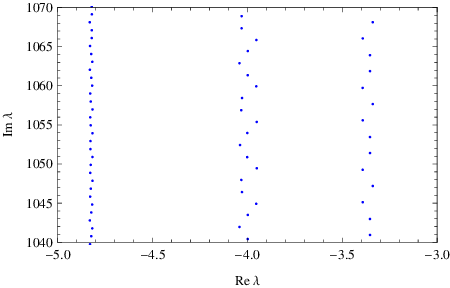}
     \caption{Spectrum of a star graph with different lengths of the edges, $l_1 = 1.5$, $l_2 = 2.1$, $l_3 = 3.1$ in Example~\ref{ex-star}}
     \label{exl_3}
   \end{center}
\end{figure}

Finally, we study the behaviour of the spectra of this graph for $l_1 = l_2 = 1$, $l_3 = 1+\varepsilon$, $a_1 (x) = a_2(x) = 1$,
$a_3 (x) = 1+ a\varepsilon$
for small $\varepsilon$ and fixed $a$. For $\varepsilon = 0$ we have eigenvalues given by the equation $\sinh{\tilde \lambda_1} = 0$ (with multiplicity 2)
and eigenvalues given by $\cosh{\tilde \lambda_1} = 0$ (with multiplicity 1). For nonzero $\varepsilon$ there is a sequence of eigenvalues given by
$\sinh{\tilde \lambda_1} = 0$ (with multiplicity 1) and we find from the numerics that, starting with real parts of the eigenvalues approximately $-1-a/3$
and $-1-2a/3$, begin to interlace each other for bigger imaginary part of the eigenvalues (see Figure~\ref{ex75_4}). The imaginary part of the point, 
where these two sequences first cross, grows as $\varepsilon$ approaches zero. The difference
between the imaginary parts of two consecutive eigenvalues in 
this sequence is approximately $\pi$. Hence, we conjecture that for $\varepsilon$ rational and approaching zero the number of high-frequency abscissas
grows to infinity and for $\varepsilon$ irrational the measure $\mu_\infty$ defined in Section~\ref{sec-number} is no longer atomic.

\begin{figure}[h]
   \begin{center}
   \includegraphics[width=\textwidth]{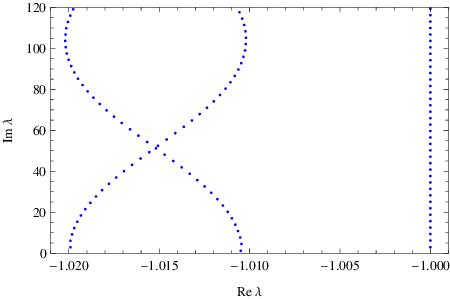}
     \caption{Spectrum of a star graph with different lengths of the edges, $l_1 = 1$, $l_2 = 1$, $l_3 = 1.03$, $a_1 = a_2 = 1$, $a_3 = 1.03$ in Example~\ref{ex-star}}
     \label{ex75_4}
   \end{center}
\end{figure}

}\end{example}


\newcommand{\etalchar}[1]{$^{#1}$}

\end{document}